\newtheorem{proposition}{Proposition}
\newtheorem{lemma}[proposition]{Lemma}
\newtheorem{theorem}[proposition]{Theorem}
\theoremstyle{acmdefinition}
\newtheorem{definition}[proposition]{Definition}
\newtheorem{remark}[proposition]{Remark}
\newcommand{\mop}[1]{\operatorname{#1}}
\newcommand{\ud}{\mathrm{d}}
\newcommand{\ddt}{\frac{\ud}{\ud t}}
\newcommand{\ddtp}[1]{\frac{\ud^{#1}}{\ud t^{#1}}}
\newcommand{\st}{\ \middle|\ }
\newcommand{\eqdef}{\triangleq}
\newcommand{\bP}{{\mathbb{P}}}
\newcommand{\bN}{{\mathbb{N}}}
\newcommand{\bC}{{\mathbb{C}}}
\newcommand{\bQ}{{\mathbb{Q}}}
\newcommand{\bR}{{\mathbb{R}}}
\def\proj{\mathrm{pr}}
\newcommand{\vol}{\mop{vol}}
\renewcommand\geq{\geqslant}
\renewcommand\leq{\leqslant}
\renewcommand\epsilon\varepsilon
\title{Computing the volume of compact semi-algebraic sets}
\date{\today}
\author{Pierre Lairez}
\affiliation{
\institution{Inria}
\state{France}
}
\email{pierre.lairez@inria.fr}
\author{Marc Mezzarobba}
\affiliation{%
  \institution{Sorbonne Universit\'e, \textsc{CNRS}, \\
    Laboratoire d'Informatique de Paris~6, \textsc{LIP6},
    \'Equipe \textsc{PeQuaN}}
  \city{F-75252, Paris Cedex 05} 
  \country{France}
}
\email{marc@mezzarobba.net}
\author{Mohab {Safey El Din}}
\affiliation{%
  \institution{Sorbonne Universit\'e, \textsc{CNRS}, {Inria},\\
    Laboratoire d'Informatique de Paris~6, \textsc{LIP6},
    \'Equipe \textsc{PolSys}}
  \city{F-75252, Paris Cedex 05} 
  \country{France}
}
\email{mohab.safey@lip6.fr}
\keywords{ Semi-algebraic sets; Picard-Fuchs equations; Symbolic-numeric
  algorithms } \thanks{%
  Marc Mezzarobba is supported in part by ANR grant ANR-14-CE25-0018-01
  \textsc{FastRelax}. Mohab Safey El Din is supported by the ANR grants
  ANR-17-CE40-0009 \textsc{Galop}, ANR-18-CE33-0011 \textsc{Sesame}, the PGMO
  grant \textsc{Gamma} and the European Union’s Horizon 2020 research and
  innovation programme under the Marie Skłodowska-Curie grant agreement N°
  813211 (POEMA).}
\begin{abstract}
  Let $S\subset \bR^n$ be a compact basic semi-algebraic set defined as the real
  solution set of multivariate polynomial inequalities with rational
  coefficients. We design an algorithm which takes as input a polynomial system
  defining $S$ and an integer $p\geq 0$ and returns the $n$-dimensional volume of $S$ at
  absolute precision $2^{-p}$.

  Our algorithm relies on the relationship between volumes of semi-algebraic
  sets and periods of rational integrals.
  It makes use of algorithms computing the
  Picard-Fuchs differential equation of appropriate periods, properties of critical
  points, and high-precision numerical integration of differential equations.

  The algorithm runs in essentially linear time with respect to~$p$.
  This improves upon the previous exponential bounds obtained by Monte-Carlo
  or moment-based methods.
  Assuming a conjecture of Dimca,
  the arithmetic cost of the algebraic subroutines for computing Picard-Fuchs
  equations and critical points is singly exponential in $n$ and polynomial in
  the maximum degree of the input.
\end{abstract}
\begin{document}

\maketitle

\section{Introduction}\label{sec:intro}

Semi-algebraic sets are the subsets of $\bR^n$ which are finite unions of real
solution sets to polynomial systems of equations and inequa\-lities with
coefficients in $\bR$.  Starting from Tarski's algorithm for
quantifier elimination \cite{Tarski} improved by Collins through the Cylindrical
Algebraic Decomposition algorithm \cite{Collins},
effective real algebraic geometry yields numerous algorithmic
innovations and asymptotically faster routines for problems like deciding
the emptiness of semi-algebraic sets, answering connectivity queries or
computing Betti numbers \cite[e.g.,][]{BaPoRo06,SaSc03,Canny88,SaSc17,BCL18}.
The output of all these algorithms is algebraic in nature.
In this paper, we study the problem of computing the volume of a
(basic) compact semi-algebraic set $S\subset \bR^n$ defined over~$\bQ$.
The output may be transcendental:
for instance, the area of the unit circle in $\bR^2$ is $\pi$.

Volumes of semi-algebraic sets actually lie in a special class of real
numbers, for they are closely related to \emph{Kontsevich-Zagier periods}
introduced in \cite{KontsevichZagier_2001}. A (real) period is the value of an
absolutely convergent
integral of a rational function with rational coefficients over a semi-algebraic
set defined by polynomials with rational coefficients.
For example, algebraic numbers are periods, as are $\pi$, $\log 2$, $\zeta(3)$.
Since $\vol S = \int_S 1 \ud x$, volumes of semi-algebraic sets
defined over~$\bQ$ are periods. Conversely, interpreting an integral as a
``volume under a graph'' shows that periods are differences of volumes of
semi-algebraic sets defined over~$\bQ$. In \cite{Viu-Sos_2015}, it is further shown
that periods are differences of volumes of \emph{compact} semi-algebraic
sets defined over~$\bQ$.

The problem we consider in this paper is thus a basic instance of the more
general problem of integrating an algebraic function over a
semi-algebraic set;
it finds applications in numerous areas of engineering
sciences. Performing these computations at high precision
(hundreds to thousands of digits) is also relevant in experimental
mathematics, especially for discovering formulas, as explained, for example, in
\cite{BaileyBorwein_2011}. Most of the examples featured in this reference are
periods, sometimes in disguise.

\paragraph{Prior work.} The simplest semi-algebraic sets one can consider are
polytopes. The computation of their volume has been extensively studied, with a
focus on the complexity with respect to the dimension. It is known that even
approximating the volume of a polytope deterministically is \#P-hard
\cite{DyerFrieze_1988,Kh89}. The celebrated probabilistic approximation
algorithm in \cite{DyerFriezeKannan_1991}, which applies to more general convex
sets, computes an $\epsilon$-approximation in time polynomial in the dimension
of the set and $1/\varepsilon$. A key ingredient for this algorithm is a Monte
Carlo method for efficiently sampling points from a convex set. Since then,
Monte Carlo schemes have been adopted as the framework of several volume
estimation algorithms.

In contrast, we deal here with compact semi-algebraic sets which can be
non-convex and even non-connected.
Additionally, while volumes of polytopes are rational, the arithmetic nature of
volumes of semi-algebraic sets is much less clear, as unclear as the nature of
periods.
This raises the question of the computational complexity of a volume, even
taken as a single real number.

A simple Monte Carlo technique applies in our setting as well:
one samples points uniformly in a box containing $S$ and estimates the
probability that they lie in $S$. This method is of practical interest at
low precision but requires~$2^{\Omega(p)}$ samples to achieve an
error bounded by $2^{-p}$ with high probability. We refer to \cite{Ko95}
which deals with \emph{definable} sets, a class which encompasses
semi-algebraic sets.

In a different direction, numerical approximation schemes based on the moment
problem and semi-definite programming have been designed in \cite{HLS09}.
They are also of practical interest at low precision, and can provide
rigorous error bounds, but the convergence is worse than exponential
with respect to~$p$~\cite{KordaHenrion_2017}.

Another line of research, going back to the nineteenth century, is concerned with
the computation of periods of algebraic varieties.
In particular, we build on work \cite{ChudnovskyChudnovsky_1990} on the
high-precision numerical solution of ODEs with polynomial coefficients which was
motivated, among other things, by applications to periods of Abelian
integrals~\cite[see][p.~133]{ChudnovskyChudnovsky_1990}.

\paragraph*{Main result.} We describe a new strategy for computing volumes
of semi-algebraic sets, at the crossroads of effective algebraic and real
algebraic geometry, symbolic integration, and rigorous numerical computing. Our
approach effectively reduces the volume computation to the setting
of~\cite{ChudnovskyChudnovsky_1990}. It yields an algorithm that approximates
the volume of a {\em fixed}, bounded basic semi-algebraic set in
almost linear time with respect to the precision.
More precisely, we prove the following bit complexity estimate.

\begin{theorem}\label{thm:main}
  Let $f_1, \ldots, f_r$ be polynomials in $\bQ[x_1, \ldots, x_n]$, and let
  $S\subset \bR^n$ be the semi-algebraic set defined by $f_1\geq 0, \ldots,
  f_r\geq 0$. Assume that $S$ is compact. There exists an algorithm which
  computes, on input~$p \geq 0$ and $(f_1, \ldots, f_r)$, an
  approximation~$\tilde V$ of the volume~$V$ of $S$ with $|\tilde V - V| \leq
  2^{-p}$.
  When $f_1, \dots, f_r$ are fixed, the algorithm runs in time
  $O(p\log(p)^{3+\varepsilon})$ (for any $\varepsilon > 0$) as $p \to \infty$.
\end{theorem}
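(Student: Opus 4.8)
The plan is to express $\vol(S)$ as the value, at an explicit endpoint, of a one-parameter \emph{period function} satisfying a Fuchsian linear differential equation with coefficients in $\bQ(t)$, and then to evaluate that solution at precision $2^{-p}$ by the high-precision numerical analytic continuation of Chudnovsky and Chudnovsky~\cite{ChudnovskyChudnovsky_1990}, whose bit complexity is quasi-linear in~$p$. The decisive point for the stated bound is a separation of concerns: everything depending on $f_1,\dots,f_r$ and $n$ but not on~$p$ --- the choice of sweep, the Picard--Fuchs operator and its singular locus (computed through critical points), and certified initial data --- is carried out once, so for fixed input polynomials it contributes $O(1)$ as $p\to\infty$; only the final numerical step is allowed to grow with~$p$.

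First I would reduce to a parametrised period. After computing a rational $M$ with $S\subseteq[-M,M]^n$, fix a linear form $\ell=a_1x_1+\dots+a_nx_n$ with small rational coefficients, chosen generically so that its restriction to every stratum of a semi-algebraic stratification of $S$ is a Morse function with pairwise distinct critical values. Put $S_t=\{x\in S:\ell(x)\leq t\}$ and $V(t)=\vol(S_t)$. Since $S$ is compact, $V$ vanishes for $t\leq t_0:=\min_S\ell$ and is constant, equal to $\vol(S)$, for $t\geq t_1:=\max_S\ell$; the algebraic numbers $t_0,t_1$ and all critical values of $\ell$ on $S$ are obtained from a critical-point computation. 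By the coarea formula, for $t$ outside this finite set, $V'(t)$ equals --- up to the constant $\lVert a\rVert$ --- the $(n-1)$-volume of the compact basic semi-algebraic slice $S\cap\{\ell=t\}$, and the divergence theorem inside the hyperplane $\{\ell=t\}$ rewrites this slice-volume as an integral of an explicit rational form over the complexified boundary of the slice; the resulting quantity is a period of a rational differential form over a family, parametrised by~$t$, of semi-algebraic cycles carried by the varieties $\{f_j=0\}$. By the algebraicity and regularity of the Gauss--Manin connection, such a parametrised period is annihilated by a Fuchsian operator $L\in\bQ[t]\langle\partial_t\rangle$, so $V$ is annihilated by $L\partial_t$, and the singular points of $L$ lie among the critical values already computed. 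Genericity of $\ell$ makes the corresponding critical points non-degenerate, hence $L$ has explicit local exponents and the real segment from $t_0$ to $t_1$, indented into $\bC$ around the finitely many interior singular points, is an admissible path of analytic continuation. Local data are explicit at $t_0$: $V\equiv0$ to its left, and the Morse model near the point where $\ell$ is minimal on $S$ gives $V(t)\sim c\,(t-t_0)^{\beta}$ with an explicit positive rational $\beta$ and an algebraic constant $c>0$, which is a certified initial condition. Thus $\vol(S)=V(t_1)=\int_{t_0}^{t_1}V'(s)\,\ud s$.

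The second step is purely numerical. Feeding $L\partial_t$, the fixed path from $t_0$ to $t_1$, and the certified local data at $t_0$ to the bit-burst solver for D-finite functions~\cite{ChudnovskyChudnovsky_1990}, implemented with rigorous ball arithmetic, returns a guaranteed enclosure of $V(t_1)=\vol(S)$. The endpoints $t_0,t_1$ are regular singular points of $L\partial_t$, but there $V$ is continuous with non-negative, logarithm-free leading behaviour, so the connection procedure simply extracts the constant term of the local expansion --- which at $t_1$ is exactly $\vol(S)$ --- at the same asymptotic cost. For a \emph{fixed} operator, path and initial condition, the bit-burst device evaluates the solution at precision $2^{-p}$ in $O(\mathsf{M}(p)\log(p)^2)$ bit operations, where $\mathsf{M}(p)=O(p\log(p)\log\log(p))$ is the cost of $p$-bit integer multiplication, that is, in $O(p\log(p)^{3+\varepsilon})$ operations for every $\varepsilon>0$. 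Finally, $\vol(S)\leq(2M)^n=O(1)$, so $O(p)$ bits already yield absolute precision $2^{-p}$: running the solver at precision $2^{-p-O(1)}$ and rounding gives $\tilde V$ with $|\tilde V-V|\leq 2^{-p}$, and the $p$-independent set-up is absorbed into the $O(1)$ as $p\to\infty$.

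The main obstacle is the algebraic-geometric set-up of the second paragraph, not the numerics. One must prove that the volume of a \emph{full-dimensional} parametrised semi-algebraic region --- an integral with a moving boundary, rather than a classical period of a form over a locally constant cycle in a smooth hypersurface --- is D-finite; make the passage to the Gauss--Manin setting effective, which is where an algorithm for Picard--Fuchs equations enters (with arithmetic cost singly exponential in~$n$ under Dimca's conjecture); locate the singular points of $L$ exactly, as critical values, via a critical-point subroutine; and certify that the generic sweep $\ell$ renders the family tame enough --- no unexpected confluence of singularities, regular singular behaviour only, integrable and logarithm-free local exponents on the relevant branch --- for the analytic continuation and the extraction of $V(t_1)$ to be valid. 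All of this is independent of~$p$, which is precisely why it is invisible in the complexity statement; the novelty of the result lies in assembling these ingredients so that the $p$-dependence is confined to a single quasi-linear-time ODE integration.
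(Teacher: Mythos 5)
Your high-level strategy — sweep a linear form across $S$, express the slice volume as a period, obtain a Fuchsian Picard--Fuchs operator, and delegate the $p$-dependent work to a quasi-linear-time rigorous ODE solver \`a la Chudnovsky--Chudnovsky / van der Hoeven — is exactly the paper's. But there is a genuine gap in how you propose to supply initial conditions, and it conceals the central difficulty that forces the paper into a recursive structure.

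You claim that the Morse model at the bottom point gives $V(t)\sim c\,(t-t_0)^{\beta}$ with $\beta\in\bQ_{>0}$ and $c$ an \emph{algebraic} constant, and that this certified initial datum suffices to launch the analytic continuation. Both halves fail. First, for a non-degenerate minimum in ambient dimension $n$ the slice near $t_0$ is (after an affine change of variables) an $(n-1)$-dimensional ball of radius $\sim(t-t_0)^{1/2}$, so $V(t)\sim c\,(t-t_0)^{n/2}$ with $c$ proportional to $\vol(B^{n-1})=\pi^{(n-1)/2}/\Gamma((n+1)/2)$ times a Hessian factor. This $c$ is transcendental in general, and computing it to precision $2^{-p}$ is itself a period/volume computation — you have silently re-introduced the problem you started with, one dimension down. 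Second, a single asymptotic condition at $t_0$ need not pin down the solution: the operator $L$ (or $L\,\partial_t$) has order $m$ that may well exceed one, and nothing in your set-up guarantees that the subspace of solutions with the prescribed leading monomial at $t_0$ is one-dimensional, or that $V$ is analytic (rather than merely piecewise-analytic) across the interior exceptional values of the sweep; the paper explicitly warns that the Picard--Fuchs equation may have solutions that extend across an exceptional point while $v$ does not, so one must re-seed initial data on \emph{each} subinterval between consecutive exceptional values. The paper resolves both issues at once: it makes \emph{recursive calls} of the same volume algorithm at $m$ ordinary rational points $\rho_j$ in each subinterval, so that the initial data $v(\rho_j)$ are themselves volumes of $(n-1)$-dimensional compact basic semi-algebraic sets, computed to precision $p+O(1)$; this recursion bottoms out at the one-variable case and costs $O(1)$ per level in $p$.

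Two further differences are worth noting. (i) The paper does not sweep $S$ directly with a generic linear form in the multi-inequality case. Instead it first performs the deformation $f=f_1\cdots f_r - t$, which reduces to a \emph{single} inequality and automatically satisfies the regularity condition \eqref{eq:R} because $\partial f/\partial t=-1$; this is what makes the Leray-tube / period argument (Theorem~\ref{thm:volume-section2}) apply cleanly, whereas your complexified boundary of a multi-inequality slice is a stratified, non-smooth object that the period machinery does not handle out of the box. The set $S$ is then recovered as the limit $\rho\to 0^+$ of the slice volumes $v(\rho)$. (ii) The paper's linear sweep, in the single-inequality recursion, is along a fixed coordinate, not a generic $\ell$; genericity is obtained instead by restricting the interval so as to avoid the finitely many exceptional values, located by a critical-point computation (Lemma~\ref{lemma:sard-algo}). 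Your Morse-theoretic genericity assumption is plausible but you would also need an effective procedure to certify it, which the paper sidesteps entirely.
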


The algorithm recursively computes integrals of volumes of sections of~$S$.
Let~$v(t)$ denote the $(n-1)$-dimensional volume of~$S\cap
\proj^{-1}(t)$, for some nonzero linear projection~$\proj : \bR^n\to \bR$.
In our setting, $v$~is a piecewise analytic function and, except at finitely
many~$t$, is solution of a linear differential equation with polynomial
coefficients known as a Picard-Fuchs equation.

The problem points belong to the critical locus of the restriction of the
projection $\proj$ to a certain hypersurface containing the boundary of $S$ and are
found by solving appropriate polynomial systems.
(Compare~\cite{Kh93} in the case of polytopes.)
The Picard-Fuchs equation for $v$ is
produced by algorithms from symbolic integration, in particular
\cite{BostanLairezSalvy_2013, Lairez_2016}. To obtain the volume of~$S$, it
then suffices to compute~$\int_\bR v$ with a rigorous numerical ODE solving
algorithm, starting from values~$v(\rho_i)$ at suitable points~$\rho_i$ obtained
through recursive calls.

The complexity with respect to the dimension~$n$ of the ambient space and the
number~$r$, maximum degree~$D$, and coefficient size of the polynomials $f_i$ is
harder to analyze.
We will see, though, that under reasonable assumptions, the ``algebraic'' steps
(computing the critical loci and of the Picard-Fuchs equations) take at most
$(rD)^{O(n^2)}$ arithmetic operations in $\bQ$.

\paragraph{Example.} The idea of the method is well illustrated by the
example of a torus $S$, here of major radius~$2$ and minor radius~$1$. Let
\[ S = \left\{ (x,y,z)\in \bR^3 \st (x^2+y^2+z^2 + 3)^2 \leq 16 (x^2+y^2) \right\}. \]
The area (2-dimensional volume) of a section~$S \cap \left\{ x=t \right\}$
defines a function~$v : \bR \to \bR$ (see Figure~\ref{fig:volume}).
\begin{figure}
  \centering
  \includegraphics{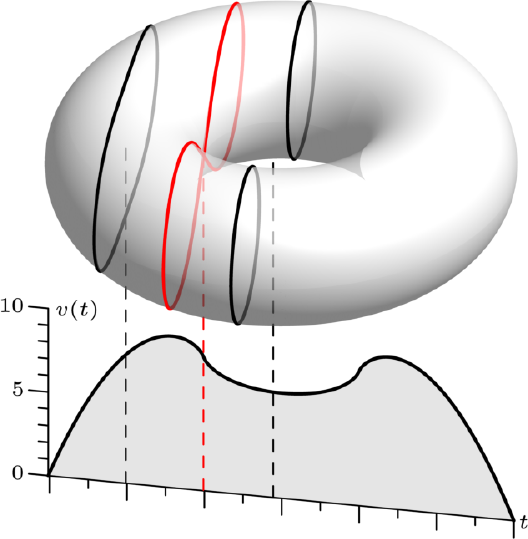}
  \vspace*{-.5\baselineskip}
  \caption{Volume of the sections of the torus $S$ as a function of the
    parameter~$t$. In red, a singular section.}
  \label{fig:volume}
\end{figure}
It is analytic, except maybe at the critical values~$t=\pm 3$ and~$t=\pm 1$
where the real locus of the curve $(t^2+y^2+z^2 + 3)^2 = 16 (t^2+y^2)$ is
singular.
On each interval on which~$v$ is analytic, it satisfies the Picard-Fuchs
equation
\begin{multline}\label{eq:2}
  (t-3)(t+3)(t-1)^2(t+1)^2t^2 v'''(t)+(t^2+9)(t-1)^2(t+1)^2t
  v''(t)  \\
  -(2t^4+11t^2-9)(t-1)(t+1) v'(t)+2(t^2+3)t^3 v(t) = 0,
\end{multline}
which we compute in 2~seconds on a laptop using the algorithm
of~\cite{Lairez_2016} and Theorem~\ref{thm:volume-section2}.

We know some special values of~$v$, namely~$v(0) = 2\pi$, $v(\pm1) = 8$
and~$v(\pm 3) = 0$.
Additionally, we have~$v(3 \pm t) = O(t^2)$ as~$t\to \mp 0$.
These properties characterize the analytic function~$v_{|(-1,1)}$ in the
$2$-dimensional space of analytic solutions of the differential
equation~\eqref{eq:2} on~$(-1,1)$, and similarly for $v_{|(1,3)}$.
(Our algorithm actually uses recursive calls at generic points instead of these
\emph{ad hoc} conditions.)
The rigorous ODE solver part of the Sage package ore\_algebra
\cite{Mezzarobba_2016} determines in less than a second that
\[ \int_{-3}^3 v(t)\ud t = 39.4784176043[...]25056533975 \pm 10^{-60}. \]
And indeed, it is not hard to see in this case that~$\vol S = 4 \pi^2$.
We can obtain 1000 digits in less than a minute.

\paragraph{Outline.}

The remainder of this article is organized as follows.
In Section~\ref{sec:volume-semi-sets}, we give a high-level description of the
main algorithm.
As sketched above, the algorithm relies on the computation of critical points,
Picard-Fuchs equations, and numerical solutions of these equations.
In Section~\ref{sec:peri-depend-param}, we discuss the computation of
Picard-Fuchs equations and critical points, relating these objects with
analyticity properties of the ``section volume'' function.
Then, in Section~\ref{sec:numer-comp-with}, we describe the numerical solution
process and study its complexity with respect to the precision.
Finally, in Section~\ref{sec:complexity}, we conclude the proof of
Theorem~\ref{thm:main} and state partial results on the complexity of the
algorithm with respect to $n$, $r$, and~$D$.

\paragraph{Acknowledgements.}

We would like to thank the anonymous reviewers for their careful reading and
valuable comments.

\section{Volumes of semi-algebraic sets}
\label{sec:volume-semi-sets}

\def\Single{A}

We start by designing an algorithm which deals with the case of a union of
connected components of a semi-algebraic set defined by a single inequality.
Next, we will use a deformation technique to handle semi-algebraic sets defined
by several inequalities.

\subsection{Sets defined by a single inequality}
\label{sec:volume-section}

Let~$f \in \bQ[t,x_1,\dotsc,x_n]$ and $\Single$ be the semi-algebraic set
\[ \Single \eqdef \left\{ (\rho,x) \in \bR\times\bR^{n} \st f(\rho,x) \geq 0 \right\}. \]
Let~$\proj : \bR^{n+1} \to \bR$ be the projection on the $t$-coordinate.
We want to compute the volume of a union~$U$ of connected components
of~$A$ starting from the volumes of suitable fibers $U \cap \proj^{-1}(\rho)$.
For technical reasons, we first consider the slightly more general
situation where~$U$ is a union of connected components of $\Single \cap
\proj^{-1}(I)$ for some open interval~$I \subseteq \bR$.
From a computational point of view, we assume that~$U$ is described by a
semi-algebraic formula~$\Theta_U$, that is,
\[ U = \left\{ (\rho,x) \in \Single \st \Theta_U(\rho,x) \right\}, \]
where $\Theta_U$ is a finite disjunction of conjunctions of
polynomial inequalities with (in our setting) rational coefficients.

For $\rho\in I$, let~$U_\rho \eqdef U \cap \proj^{-1}(\rho)$ and
$v(\rho) \eqdef \vol_n U_\rho$.
Let~$\Sigma_f \subseteq \bR$ (we will often omit the subscript~$f$) be the set
of \emph{exceptional values}
\begin{equation}
  \label{eq:Sigma}
\Sigma_f \eqdef 
  \left\{ \rho \in \bR \mid \exists x\in\bR^n, f(\rho,x) = 0
  \wedge \forall i, 
  \tfrac{\partial}{\partial x_i} f(\rho, x) = 0 \right\}.  
\end{equation}
Thus, when $f$ is square-free, exceptional values are either critical values of
the restriction to the hypersurface $\{f=0\}$ of the projection~$\proj$, or
images of singular points of $\{f = 0\}$.
By definition of $\Sigma$, for any~$\rho\in\bR \setminus \Sigma$, the zero set
of $f_\rho = f(\rho,-)$ is a smooth submanifold of~$\bR^n$. 

Further, we say that assumption \eqref{eq:R} holds for $f$ if
\begin{equation} \label{eq:R}\tag{R}
  \left\{ z\in\mathbb{R}^{n+1} \st
  f(z)=0
  \wedge \tfrac{\partial}{\partial t}f(z)=0
  \wedge \forall i, 
  \tfrac{\partial}{\partial x_i}f(z) = 0
  \right\} = \varnothing.
\end{equation}
Observe that
by Sard's theorem \cite[e.g.][Theorem 5.56]{BaPoRo06},
when \eqref{eq:R} holds,
the exceptional set~$\Sigma$ is finite.

The mainstay of the method is the next result, to be proved
in~\S \ref{sec:peri-depend-param}.
Let $\mathscr D \subset \bQ[t][\ddt]$ denote the set of Fuchsian linear
differential operators with coefficients in~$\bQ[t]$ whose local exponents at
singular points are rational
(see §\ref{sec:numer-comp-with} for reminders on Fuchsian operators and their
exponents).

\begin{theorem}\label{thm:volume-section}
  If $U$\! is bounded and
  $I\cap \Sigma = \varnothing$, then the function~$v_{|I}$
  is solution of a computable differential equation of the form $P(v)=0$,
  where $P \in \mathscr D$ depends only on $f$.
\end{theorem}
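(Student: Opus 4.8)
The plan is to realize $v(\rho) = \vol_n U_\rho$ as a \emph{period integral} depending on the parameter $\rho$, and then to invoke the theory of Picard–Fuchs equations for such integrals. First I would write $v(\rho) = \int_{U_\rho} 1\,\ud x$ as the integral of the constant differential $n$-form $\ud x_1 \wedge \cdots \wedge \ud x_n$ over the fiber $U_\rho$, which is a (possibly disconnected) compact semi-algebraic subset of $\bR^n$ bounded by pieces of the smooth hypersurface $\{f_\rho = 0\}$ (smooth because $I \cap \Sigma = \varnothing$) together with pieces of the hyperplanes cutting out $\Theta_U$. The key observation is that on such a chain with algebraic boundary, the volume can be rewritten, via Stokes' theorem, as an integral of a rational $(n-1)$-form over the boundary hypersurface — more precisely, over an algebraic cycle in the complement of $\{f_\rho = 0\}$ — so that $v$ becomes a period of the family of smooth affine hypersurfaces $\{f_\rho = 0\}$ parametrized by $\rho \in I$. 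One must be a little careful here because the constant form $\ud x$ is not exact on affine space in a way compatible with the algebraic structure; the standard device is to introduce a primitive such as $\frac{1}{n}\sum_i (-1)^{i-1} x_i \, \ud x_1 \wedge \cdots \widehat{\ud x_i} \cdots \wedge \ud x_n$ and push the computation onto the boundary, where $f_\rho = 0$, and then express the resulting boundary integrand as the residue of a rational $n$-form with pole along $\{f_\rho = 0\}$.

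Once $v$ is exhibited as a period $\int_{\gamma(\rho)} \omega_\rho$ with $\omega_\rho$ a rational $n$-form on $\bR^n$ (or its complexification $\bC^n$) whose polar locus is the smooth hypersurface $\{f_\rho = 0\}$, and $\gamma(\rho)$ a locally constant family of homology classes in $\bC^n \setminus \{f_\rho = 0\}$ (or a tube around a cycle in the hypersurface), I would invoke the fact that such parametrized integrals satisfy a linear ODE with polynomial coefficients in $\rho$ — the Picard–Fuchs equation — obtained by the classical argument that the de Rham cohomology of the fibers is finite-dimensional, so the forms $\omega_\rho, \partial_\rho \omega_\rho, \partial_\rho^2 \omega_\rho, \dots$ eventually become linearly dependent modulo exact forms, and the dependency relation integrates to the desired ODE. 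Algorithmically this operator is produced by the creative-telescoping / reduction-based integration algorithms of \cite{BostanLairezSalvy_2013, Lairez_2016}, which is precisely the ``computable'' claim. This shows $v_{|I}$ is annihilated by some $P \in \bQ[t][\ddt]$ depending only on $f$ (the cohomological construction is intrinsic to the family, not to the choice of component $U$ or interval $I$).

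It then remains to argue that $P$ may be taken in $\mathscr D$, i.e. Fuchsian with rational exponents. For this I would appeal to the geometric origin of the equation: Picard–Fuchs operators attached to the cohomology of a family of algebraic varieties are always regular singular (Fuchsian) — this is a theorem of Griffiths, Katz, Deligne — and their local exponents are rational because the local monodromy is quasi-unipotent (the monodromy theorem). Concretely one can take $P$ to be the minimal-order annihilator of $v$ produced by the integration algorithm; minimality forces all its singularities to be apparent or genuine singularities of the cohomological connection, hence regular, and one may need to clear any apparent singularities or otherwise pass to the operator governing the full local system to guarantee the exponents are the (rational) cohomological ones. The main obstacle, I expect, is exactly this last point: controlling the \emph{boundary} contributions to $v$ coming from the hyperplane facets of $\Theta_U$ — these are not pure hypersurface periods, and one must check that incorporating them (e.g. by integrating out one variable at a time, or by viewing the chain boundary as lying in a normal-crossings compactification) does not destroy the Fuchsian / rational-exponent property. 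Handling the interaction of the algebraic boundary $\{f = 0\}$ with the linear cutting inequalities, and ensuring the resulting $P$ still lies in $\mathscr D$, is where the real work of the proof in \S\ref{sec:peri-depend-param} will go.
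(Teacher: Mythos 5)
Your overall strategy---Stokes' theorem to push the volume integral onto the boundary, a Leray tube / residue construction to lift it to an integral of a rational $n$-form over a cycle in $\bC^n \setminus V(f_\rho)$, then Picard--Fuchs regularity with quasi-unipotent monodromy for the claim $P \in \mathscr D$---is the same as the one taken in \S\ref{sec:peri-depend-param} via Theorems~\ref{thm:volume-section2} and~\ref{thm:picard-fuchs}. However, the issue you flag at the end as ``where the real work of the proof will go'' is in fact a non-issue, and recognizing this is precisely the structural point your proposal misses. You assert that $U_\rho$ is bounded by pieces of $\{f_\rho=0\}$ \emph{together with} pieces of the hyperplanes cutting out $\Theta_U$. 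This is not so. By hypothesis $U$ is a union of connected components of $A \cap \proj^{-1}(I)$, hence relatively open and closed there, so each slice $U_\rho$ is a union of connected components of $A_\rho = \{f_\rho \geq 0\}$. Consequently $\partial U_\rho \subseteq \{f_\rho = 0\}$ with no contribution from $\Theta_U$ whatsoever: $\Theta_U$ is only a computational \emph{description} of $U$, not a collection of extra inequalities carving facets into its boundary. Once you note this, $\partial U_\rho$ is a smooth compact hypersurface contained in $V(f_\rho)$ (smoothness because $\rho \notin \Sigma$), the Leray residue applies without boundary corrections, and the worry about mixed hypersurface/hyperplane periods, normal-crossings compactifications, or integrating out one variable at a time simply evaporates.

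The other step you wave at without justification---that the cycle (a Leray tube around $\partial U_\rho$) can be chosen locally constant in $\rho$---is supplied by Ehresmann's theorem: since $I \cap \Sigma = \varnothing$ and $\partial U$ is compact, the projection $\proj$ restricted to $\partial U \cap \proj^{-1}(I)$ is a proper submersion onto $I$, so the boundary family trivializes locally and the tube deforms continuously while staying off the polar locus. Apart from these two points your outline matches the proof of Theorem~\ref{thm:volume-section2}, and your remark about the computed operator possibly not being minimal (hence not obviously in $\mathscr D$) is the legitimate concern addressed after Theorem~\ref{thm:algo-periods}.
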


We will also use the following proposition, which summarizes the results of
Proposition~\ref{prop:dsolve} and Lemma~\ref{lem:evaluation-points}
in~§\ref{sec:numer-comp-with}.
The complete definition of ``good initial conditions'' is given there as well.
Up to technical details, this simply means a system~$\mathscr I$ of linear
equations of the form $y^{(k)}(u) = s$ that suffices to characterize a
particular solution~$y$ among the solutions of $P(y)=0$.
An \emph{$\epsilon$\nobreakdash-approximation} of~$\mathscr I$ is made of the
same equations with each right-hand side~$s$ replaced by an enclosure~$\tilde s
\ni s$ of diameter $\leq \epsilon$.

\begin{proposition} \label{prop:spec-dsolve}
  Let $P \in \mathscr D$ have order~$m$, and let $J = (\alpha, \beta)$ be a real
  interval with algebraic endpoints.
  Let $y: J \to \bR$ be a solution of $P(y) = 0$ with a finite limit at~$\alpha$
  and $\mathscr I$ be a system of good initial conditions for~$P$ on~$J$
  defining~$y$.

  \begin{enumerate}

  \item \label{item:dsolve}
  Given $P, \alpha$, a precision~$p \in \bN$ and a $2^{-p}$-approximation
  $\tilde{\mathscr I}$ of~$\mathscr I$, one can compute an interval of width%
  ~$O(2^{-p})$ (as $p \to \infty$ for fixed $P$, $\alpha$, and~$\mathscr I$)
  containing $\lim_{t \to \alpha} y(t)$.

  \item \label{item:pickgoodpoints}
  Given $P, \alpha, \beta$, one can compute $\rho_1, \dots, \rho_{m} \in J \cap \bQ$
  such that the $y(\rho_j)$ form a system of good initial
  conditions for~$P$ on~$J$.

  \end{enumerate}
\end{proposition}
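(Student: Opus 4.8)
The plan is to reduce both parts to standard results on rigorous numerical solution of Fuchsian ODEs, in the spirit of \cite{ChudnovskyChudnovsky_1990,Mezzarobba_2016}, once the right analytic setup is in place; the genuinely new content is getting uniform control near the singular endpoint~$\alpha$. For part~\eqref{item:dsolve}, I would argue as follows. Since $P \in \mathscr D$ is Fuchsian with rational local exponents, the solution space of $P(y)=0$ on a punctured neighborhood of~$\alpha$ has a basis of the form $(t-\alpha)^{e_i} \sum_{k \geq 0} c_{i,k}(t-\alpha)^k$, possibly with logarithmic terms; the ones admitting a finite limit at~$\alpha$ are exactly those spanned by basis elements with $e_i > 0$ (plus, if present, a constant). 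The first step is therefore to expand a fundamental matrix of $P$ around~$\alpha$ to enough terms to pin down which combination is~$y$, using the data in $\mathscr I$: by definition of ``good initial conditions,'' $\mathscr I$ determines $y$ uniquely, so the linear system relating the $m$ unknown coordinates of $y$ in the local basis to the values $y^{(k)}(u)$ prescribed by $\mathscr I$ is invertible. Solving that system in interval arithmetic, starting from the $2^{-p}$-enclosures of the right-hand sides, yields enclosures of the local coordinates; the contribution of the components with $e_i>0$ then tends to~$0$ as $t\to\alpha$, so $\lim_{t\to\alpha}y(t)$ is simply the constant-term coordinate, and the width of the resulting enclosure is $O(2^{-p})$ because the transition matrix from $u$ to $\alpha$ is a fixed (for fixed $P,\alpha,\mathscr I$) bounded linear map with bounded-error rigorous evaluation. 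The one subtlety is that $\mathscr I$ may impose conditions at a point, or involve limits, that are not in the disk of convergence of the expansion at~$\alpha$; this is handled by analytic continuation along the segment, i.e. composing a bounded number of transition matrices, each computed rigorously — and this composition only inflates the error by a constant factor, which is where the ``fixed $P,\alpha,\mathscr I$'' hypothesis and the almost-linear dependence on $p$ enter.

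For part~\eqref{item:pickgoodpoints}, the key observation is that an $m$-tuple $(\rho_1,\dots,\rho_m) \in J^m$ gives a system of good initial conditions for $y$ exactly when the evaluation functionals $y \mapsto y(\rho_j)$ are linearly independent on the solution space, equivalently when the generalized Wronskian-type matrix $W(\rho_1,\dots,\rho_m) = \bigl(y_j(\rho_i)\bigr)_{i,j}$ built from any fixed fundamental system $y_1,\dots,y_m$ is invertible. The function $(\rho_1,\dots,\rho_m)\mapsto \det W$ is analytic on the subset of $J^m$ where all $\rho_i$ lie in a common domain of analyticity of the $y_j$, and it is not identically zero: if it vanished on a nonempty open set, then some nontrivial linear combination of the $y_j$ would vanish at every point of that set and hence identically, contradicting linear independence. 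Therefore I would: (i) pick any subinterval $J' = (\alpha',\beta') \subseteq J$ with rational endpoints avoiding the finitely many singular points of $P$ inside~$J$ (there are none interior to $J$ if $P$ has no apparent singularities there, but in general finitely many, so this is effective); (ii) on $J'$ the $y_j$ are given by convergent power series, so $\det W$ is a nonzero analytic function on $(J')^m$ that is effectively computable to any precision; (iii) search over rational tuples on a shrinking grid — say try all tuples with denominator $2^k$ for $k=1,2,\dots$ — and for each candidate, certify $\det W \neq 0$ by computing a rigorous enclosure that excludes~$0$. Because the zero set of $\det W$ has empty interior, some candidate succeeds after finitely many steps, and the search terminates. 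Computing a certified enclosure of $\det W(\rho_1,\dots,\rho_m)$ reduces again to the rigorous Fuchsian solver: evaluate each $y_j(\rho_i)$ by analytic continuation from a fixed base point, form the matrix, take the determinant in interval arithmetic.

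The main obstacle in both parts is the same: making ``characterize $y$ among solutions'' fully effective and rigorous at the Fuchsian singularity~$\alpha$, i.e. replacing the classical local monodromy/connection-coefficient theory by certified computation. Concretely, one must (a) compute the local exponents and a local basis at~$\alpha$ with guaranteed-correct logarithmic structure, (b) bound the tails of the local expansions so that the limit extraction in part~\eqref{item:dsolve} is rigorous, and (c) compute connection matrices between~$\alpha$ and interior rational points with certified error. All three are provided, at least in principle, by the algorithms underlying \cite{Mezzarobba_2016} for Fuchsian operators with rational exponents, so the proof mostly amounts to invoking those tools with the correct bookkeeping; I would state the needed interface as a black box (``rigorous evaluation of transition matrices for operators in $\mathscr D$, with cost and error bounds uniform in the target point and almost-linear in the bit precision'') and cite it, then assemble parts~\eqref{item:dsolve} and~\eqref{item:pickgoodpoints} as above. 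The complexity claim in part~\eqref{item:dsolve} — width $O(2^{-p})$ in time polynomial in $p$, indeed the almost-linear bound implicit elsewhere — follows because, for fixed $P,\alpha,\mathscr I$, everything is a fixed finite composition of transition-matrix evaluations, each of which costs $O(p\log(p)^{c})$ bit operations by the results cited in §\ref{sec:numer-comp-with}.
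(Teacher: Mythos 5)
Your argument for part~(\ref{item:dsolve}) is essentially the paper's (Proposition~\ref{prop:dsolve} via Algorithm~\ref{algo:dsolve}): compute transition matrices from~$\alpha$ to the points where $\mathscr I$ imposes its conditions, solve the resulting, possibly over-determined, linear system for the coordinates of~$y$ in the local basis~$B_\alpha$, and read off the coordinate of the unique basis element with finite nonzero limit at~$\alpha$ (returning~$0$ when no such element exists). For part~(\ref{item:pickgoodpoints}), however, you take a genuinely different route. The paper's Lemma~\ref{lem:evaluation-points} takes a rational ordinary point~$u_1$ at the endpoint of a closed subinterval $K\subset J$, writes the evaluation matrix in the basis~$B_{u_1}$ as a perturbed Vandermonde matrix $\bigl(u_i^{j-1}+\eta_j(u_i)\bigr)_{i,j}$ with a computable constant~$C$ bounding $|\eta_j(u)|\le C\,|u|^m$, and deduces an explicit radius~$\epsilon$ below which invertibility holds; this is a direct construction that also controls, at least in principle, the bit size of the~$\rho_j$, which matters for the complexity remarks of~§\ref{sec:complexity}. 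Your alternative — establish that $\det W$ is a nonzero analytic function on $(J')^m$ and run a certified grid search — is correct in substance, but it is an unbounded search that gives no a~priori size control, and it needs one repair to guarantee termination: you must dovetail grid refinement with evaluation precision (say, at stage~$k$ try every rational $m$-tuple in~$J'$ of denominator~$2^k$, each at working precision~$2^{-k}$); with a fixed precision the search may never certify anything, while with a per-candidate precision loop a tuple with $\det W=0$ would hang forever. Lastly, ``some nontrivial linear combination of the~$y_j$ would vanish on that set'' deserves a sentence of justification, since the linear relation extracted from a singular evaluation matrix depends on the tuple; the usual Wronskian-style induction (expand $\det W$ along one row and note that the cofactors are lower-order evaluation determinants) makes this precise.
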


Assume now that $U$ is a bounded union of connected components of~$A$ (i.e., that
we can take $I=\bR$ above), and that \eqref{eq:R} holds for $f$.
The algorithm is recursive.
Starting with input $f, \Theta_U$, and~$p$,
it first computes the set $\Sigma =
\{\alpha_1\leq \cdots \leq \alpha_\ell\}$ of exceptional values so as to
decompose $\bR - \Sigma$ into intervals over which the function~$v$ satisfies the
differential equation $P(y) = 0$ given by Theorem~\ref{thm:volume-section}.
Since $U$~is bounded, one has
\[
  \vol_{n+1}U
  = \sum_{i=1}^{\ell-1} \vol_{n+1}
    \left (U \cap \proj^{-1}(\alpha_i, \alpha_{i+1})\right )
  = \sum_{i=1}^{\ell-1} \int_{\alpha_i}^{\alpha_{i+1}} v(t) \, \mathrm d t.
\]

Fix~$i$ and consider the interval $J = (\alpha_i, \alpha_{i+1})$.
Since $v_{|J}$ is annihilated by~$P$,
its anti-derivative $w: J \to \bR$ vanishing at~$\alpha_{i+1}$ is annihilated
by the operator $P \, \ddt$, which belongs to~$\mathscr D$ since $P$~does.
Additionally, if $[ v(\rho_j) = s_j ]_j$ is a system of good initial
conditions for~$P$ that defines~$v_{|J}$, then
$[ w'(\rho_j) = s_j ]_j \cup [ w(\alpha_{i+1}) = 0 ]$
is a system of good initial conditions for $P \, \ddt$ defining~$w$
(see Lemma~\ref{lemma:ini-integral} in~§\ref{sec:numer-comp-with}).
Thus, by Proposition~\ref{prop:spec-dsolve}, to compute $w(\alpha_i)$ to
absolute precision~$p$, it suffices to compute $v(\rho_j)$, $1\leq j \leq m$,
to precision $p + O(1)$.

By definition of $\Sigma$, since $\rho_j \notin \Sigma$,
there is no solution to the system
\[
f(\rho_j,-)=\tfrac{\partial }{\partial x_1}f(\rho_j,-)=\cdots=\tfrac{\partial
  }{\partial x_n}f(\rho_j,-)=0
\]
which means that \eqref{eq:R} holds for $f(\rho_j,-)$.
Additionally, $U\cap \proj^{-1}(\rho_j)$ is a bounded union of connected
components of $A \cap \proj^{-1}(\rho_j)$.
Hence, the values $v(\rho_j)$ can be obtained by recursive calls to the
algorithm with $t$ instantiated to~$\rho_j$.

The process terminates since each recursive call handles one less variable.
In the base case, we are left with the problem of computing the length of a
union of real intervals encoded by a semi-algebraic formula.
This is classically done using basic univariate polynomial arithmetic and real
root isolation \cite[Chap. 10]{BaPoRo06}.

\begin{algorithm}[tb]
    \caption{Volume of $U$ at precision $O(2^{-p})$}
    \label{algo:volumerec}
    \begin{algorithmic}[1] %
      \Procedure{Volume1}{$f, \Theta_U, (t, x_1, \ldots, x_n), p$}
      \If {$n=0$}
      return $\textsc{UnivariateVolume}(f, \Theta_U, p)$.
      \EndIf
      \State $(\alpha_1, \ldots, \alpha_\ell) \gets \textsc{CriticalValues}(f, t)$
      \State $P \gets \textsc{PicardFuchs}(f, t)$
      \For{$1 \leq i \leq \ell-1$}
      \Comment{$\tilde s_j$, $\tilde S_i$ are intervals}
      \State $(\rho_1, \ldots, \rho_{m}) \gets \textsc{PickGoodPoints}(P, \alpha_i, \alpha_{i+1})$
      \For{$1\leq j \leq m$}
      \State $\tilde s_j \gets \textsc{Volume1}(f_{|t=\rho_j}, {\Theta_U}_{|t=\rho_j}, (x_1, \ldots, x_n), p)$
      \EndFor
      \State $\tilde{\mathscr I} \gets [
        y'(\rho_1) = \tilde s_1, \dots, y'(\rho_{m}) = \tilde s_{m},
        y(\alpha_{i+1}) = 0 ]$
      \State $\tilde S_i \gets - \textsc{DSolve}(P \ddt, \tilde{\mathscr I}, \alpha_i, p)$
      \EndFor
      \State \textbf{return} $\tilde S_1+\cdots + \tilde S_\ell$
      \EndProcedure
    \end{algorithmic}
\end{algorithm}

The complete procedure is formalized in Algorithm~\ref{algo:volumerec}.
The quantities denoted with a tilde in the pseudo-code are understood to be
represented by intervals, and the operations involving them
follow the semantics of interval arithmetic.
Additionally, we assume that we have at our disposal the following subroutines:
\begin{itemize}
\item $\textsc{PicardFuchs}(f, t)$,
  $\textsc{DSolve}(P, \tilde{\mathscr I}, \alpha, p)$,
  and
  $\textsc{PickGoodPoi}\linebreak[0]\textsc{nts}(P, \alpha, \beta)$,
  which implement the algorithms implied, respectively,
  by Theorem~\ref{thm:volume-section} and
  Proposition~\ref{prop:spec-dsolve}
  (\ref{item:dsolve})~and~(\ref{item:pickgoodpoints});
\item $\textsc{CriticalValues}(f, t)$, which returns an encoding for a
  finite set of real algebraic numbers containing the exceptional values
  associated to~$f$, sorted in increasing order;
\item $\textsc{UnivariateVolume}(g, \Theta_U, p)$ where
  $g \in \bQ[t]$ and
  $\Theta_U$ is a semi-algebraic formula describing a union~$U$ of connected
  components of $\{g\geq 0\}$,
  which returns an interval of width $\leq 2^{-p}$ containing $\vol_1 U$.
\end{itemize}

The following result summarizes the above discussion.

\begin{theorem} \label{thm:volumerec}
  Assume that $U$~is a bounded union of connected components of~$A$
  and that \eqref{eq:R}~holds.
  Then, on input $f, \Theta_U, p$ and $(t, x_1, \ldots, x_n)$,
  Algorithm~\ref{algo:volumerec} ($\textsc{Volume1}$) returns a real
  interval of width $O(2^{-p})$ (for fixed~$f$) containing~$\vol_{n+1} U$.
\end{theorem}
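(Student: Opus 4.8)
The plan is to prove correctness by induction on the number $n+1$ of variables, tracking both the mathematical identities underlying the algorithm and the propagation of interval widths. First I would dispatch the base case $n=0$: here $f \in \bQ[t]$ defines a union $U$ of real intervals cut out by $\Theta_U$, and by hypothesis $\textsc{UnivariateVolume}$ returns an interval of width $\leq 2^{-p}$ containing $\vol_1 U$; nothing further is needed. For the inductive step, assume the statement holds for all semi-algebraic data in $n$ variables. I would first record the geometric decomposition already derived in the text: since \eqref{eq:R} holds for $f$, the exceptional set $\Sigma$ is finite, say $\Sigma = \{\alpha_1 \leq \cdots \leq \alpha_\ell\}$ (padding with arbitrary rationals outside the support of $U$ if needed so that $U \cap \proj^{-1}(\alpha_1,\alpha_\ell)$ captures all of $U$), and $\vol_{n+1} U = \sum_{i=1}^{\ell-1} \int_{\alpha_i}^{\alpha_{i+1}} v(t)\,\ud t$ where $v$ is the section-volume function. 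On each open interval $J = (\alpha_i,\alpha_{i+1})$, Theorem~\ref{thm:volume-section} gives $P \in \mathscr D$, depending only on $f$, with $P(v_{|J}) = 0$; its order $m$ is likewise a function of $f$ only.

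Next I would verify that the data assembled inside the loop genuinely determine $w_{|J}$, the antiderivative of $v_{|J}$ vanishing at $\alpha_{i+1}$. By Proposition~\ref{prop:spec-dsolve}(\ref{item:pickgoodpoints}), $\textsc{PickGoodPoints}$ returns $\rho_1,\dots,\rho_m \in J \cap \bQ$ such that $[v(\rho_j) = s_j]_j$ is a system of good initial conditions for $P$ defining $v_{|J}$; by Lemma~\ref{lemma:ini-integral}, $[w'(\rho_j)=s_j]_j \cup [w(\alpha_{i+1})=0]$ is then a system of good initial conditions for $P\ddt \in \mathscr D$ defining $w_{|J}$, and $w$ has a finite limit at $\alpha_i$ equal to $-\int_{\alpha_i}^{\alpha_{i+1}} v$. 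I would then invoke the recursion: since $\rho_j \notin \Sigma$, assumption \eqref{eq:R} holds for $f(\rho_j,-)$, and $U \cap \proj^{-1}(\rho_j)$ is a bounded union of connected components of $A \cap \proj^{-1}(\rho_j)$ described by ${\Theta_U}_{|t=\rho_j}$, so the inductive hypothesis applies to the recursive call $\textsc{Volume1}(f_{|t=\rho_j}, {\Theta_U}_{|t=\rho_j}, (x_1,\dots,x_n), p)$: it returns an interval $\tilde s_j$ of width $O(2^{-p})$ containing $s_j = v(\rho_j)$. Thus $\tilde{\mathscr I}$ is a $C\,2^{-p}$-approximation of the good initial conditions $\mathscr I$ for $P\ddt$, for some constant $C$ depending only on $f$ (the number of recursive levels and the implied constants at each level being bounded in terms of $f$).

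Now I would feed this into Proposition~\ref{prop:spec-dsolve}(\ref{item:dsolve}): given $P\ddt$, the endpoint $\alpha_i$, a precision parameter, and the approximation $\tilde{\mathscr I}$, $\textsc{DSolve}$ returns an interval of width $O(2^{-p})$ containing $\lim_{t\to\alpha_i} w(t) = -\int_{\alpha_i}^{\alpha_{i+1}} v$; negating gives $\tilde S_i \ni \int_{\alpha_i}^{\alpha_{i+1}} v(t)\,\ud t$ of width $O(2^{-p})$. (One subtlety to address cleanly: to end up with total width $O(2^{-p})$ one calls the recursive subroutines and $\textsc{DSolve}$ at precision $p + O(1)$, the additive constant absorbing the finitely many additions and the width inflation in Proposition~\ref{prop:spec-dsolve}; since there are $\ell-1$ intervals and $\ell$ depends only on $f$, this is harmless.) Summing, $\tilde S_1 + \cdots + \tilde S_\ell \ni \sum_i \int_{\alpha_i}^{\alpha_{i+1}} v = \vol_{n+1} U$ with width $O(2^{-p})$, which is the claim. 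The main obstacle, and the place where care is genuinely required rather than routine, is the bookkeeping of how interval widths and precision offsets compound through the recursion: one must argue that the depth of recursion, the order $m$ of the successive Picard–Fuchs operators, the number $\ell$ of exceptional values, and the constants hidden in Proposition~\ref{prop:spec-dsolve} are all controlled by the fixed input $f$ (independently of $p$), so that a uniform additive loss of $O(1)$ bits per level yields the stated $O(2^{-p})$ final width; everything else is a direct assembly of the already-established Theorem~\ref{thm:volume-section}, Proposition~\ref{prop:spec-dsolve}, and the integration lemmas.
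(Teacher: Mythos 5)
Your proof follows the same route as the paper, which presents Theorem~\ref{thm:volumerec} as a summary of the preceding discussion: a recursion/induction on the number of variables, with the decomposition $\vol_{n+1} U = \sum_i \int_{\alpha_i}^{\alpha_{i+1}} v$, Theorem~\ref{thm:volume-section} supplying the Picard--Fuchs operator on each interval, Lemma~\ref{lemma:ini-integral} transporting good initial conditions from $P$ to $P\,\ddt$, Proposition~\ref{prop:spec-dsolve} handling the numerical solving at precision $p + O(1)$, and the observation that each $\rho_j \notin \Sigma$ restores hypothesis~\eqref{eq:R} for the specialization $f(\rho_j,-)$ so the inductive hypothesis applies. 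Your treatment of the precision bookkeeping is in fact slightly more explicit than the paper's. One small inaccuracy: your parenthetical proposes ``padding $\Sigma$ with arbitrary rationals outside the support of $U$'', but Algorithm~\ref{algo:volumerec} does no such padding, so the decomposition formula actually requires that $\proj(U) \subseteq [\alpha_1,\alpha_\ell]$ hold automatically. This is true (and is what the paper tacitly uses): since $U$ is a bounded union of connected components of the closed set $A$, it is compact, and a short argument using~\eqref{eq:R} and the implicit function theorem shows that the extremal values of $\proj$ on~$U$ lie in~$\Sigma$. Replacing the padding remark with that observation would align the argument exactly with the algorithm as written; otherwise the proposal matches the paper.
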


\subsection{Sets defined by several inequalities}
\label{sec:general-semi-algebraic}

Now, we show how to compute the volume of a basic
semi-algebraic set $S\subset \bR^n$ defined by
\[
f_1\geq 0, \ldots, f_r\geq 0, \qquad f_i \in \bQ[x_1, \dots, x_n],
\]
assuming that $S$ is compact.

We set $f = f_1\cdots f_r - t \in \bQ[t, x_1, \dots, x_n]$, and consider the
semi-algebraic set $A\subset \bR^{n+1}$ defined by $f \geq 0$. Observe that the
polynomial~$f$ satisfies \eqref{eq:R} because~$\frac{\partial f}{\partial t} =
-1$. We can hence choose an interval $I = (0, \alpha)$ with $\alpha \in \bQ$
that contains no element of~$\Sigma_f$. Let $U\eqdef A\cap (I \times S)$ and
$\proj$ be the projection on the $t$-coordinate. For fixed~$\rho \in I$, the set
$U\cap \proj^{-1}(\rho)$ can be viewed as a bounded subset of~$S$, whose volume
$v(\rho) = \vol_n(U\cap \proj^{-1}(\rho))$ tends to $\vol_n S$ as $\rho \to 0$.

The set
$U$ itself is bounded and the formula
\[ \Theta_U = f_1 \geq 0\wedge \dotsb \wedge f_r \geq 0\wedge 0 < t < \alpha \]
defines $U$ in $A$.
In addition, $U$ is a union of connected components of $A \cap \proj^{-1}(I)$.
Indeed, for any point~$(\rho,x) \in A$ with $\rho \in I$, it
holds that $f_1(x)\dotsb \allowbreak f_r(x) > 0$.
This implies that $U = A \cap (I \times \mathring S)$ where
$\mathring S$ is the interior of~$S$.
Therefore, $U$ is both relatively closed (as the trace of $\bR \times S$) and
open (as that of $\bR \times \mathring S$) in $A \cap \proj^{-1}(I)$.

We are hence in the setting of the previous subsection.
Since $I\cap \Sigma_f = \varnothing$ by definition of~$I$,
Theorem~\ref{thm:volume-section} applies,
and the function $v: I \to \bR$ is annihilated by an operator $P \in \mathscr D$
which is computed using the routine $\textsc{PicardFuchs}$ introduced
earlier.
By Proposition~\ref{prop:spec-dsolve}, one can choose rational points~$\rho_j
\in I$ such that the values of~$v$ at these points characterize it among the
solutions of~$P$, and, given sufficiently precise approximations of~$v(\rho_j)$,
one can compute $\vol_n S = \lim_{t\to0} v(t)$ to any desired accuracy.

The ``initial conditions'' $v(\rho_j)$ are computed by calls to
Algorithm~\ref{algo:volumerec} with $f$~and~$\Theta_U$ specialized to
$t=\rho_j$.
In the notation of §\ref{sec:volume-section}, this corresponds to taking
$A = A(\rho_j) = \{ f_1 \cdots f_r \geq \rho_j \}$
and
$U = U(\rho_j) = A(\rho_j) \cap S$.
Thus, $U(\rho_j)$ is compact, and, since no~$f_i$ can change sign on a connected
component of $A(\rho)$ for $\rho>0$, it is the union of those
connected components of~$A(\rho_j)$ where $f_1, \dots, f_r \geq 0$.
Additionally, \eqref{eq:R} holds for $f(\rho_j, -)$ since $\rho_j \notin \Sigma_f$.
Therefore, the assumptions of Theorem~\ref{thm:volumerec} are satisfied.

\begin{algorithm}[t]
    \caption{Volume of $S$}
    \label{algo:volume}
    \begin{algorithmic}[1] %
      \Procedure{Volume}{$(f_1, \ldots, f_r), p$}
      \State $f\gets f_1\cdots f_r -t$
      \State $(\alpha_1, \ldots, \alpha_\ell) \gets \textsc{CriticalValues}(f,
      t)$
      \State $\alpha \gets \text{a rational s.t.\ $0 < \alpha < \min (\{\alpha_i \mid \alpha_i >0\}\cup\{1\})$}$
      \State $\Theta_U \gets f_1 \geq 0\wedge \dotsb \wedge f_r \geq 0\wedge 0 < t < \alpha$
      \State $P \gets \textsc{PicardFuchs}(f, t)$
      \State $(\rho_1, \ldots, \rho_{m}) \gets \textsc{PickGoodPoints}(P, 0, \alpha)$
      \For{$1\leq j \leq m$}
      \Comment{$\tilde s_j$ are intervals}
      \State $\tilde s_j \gets \textsc{Volume1}({f}_{|t=\rho_j}, (\Theta_U)_{|t=\rho_j}, (x_1, \ldots, x_n), p)$
      \EndFor
      \State \textbf{return} $\textsc{DSolve}(P, [ y(\rho_j) = \tilde s_j ]_{j=1}^{m}, p)$
      \EndProcedure
    \end{algorithmic}
\end{algorithm}

We obtain Algorithm~\ref{algo:volume} (which uses the same subroutines and
conventions as Algorithm~\ref{algo:volumerec}) and the following correctness
theorem.

\begin{theorem}\label{thm:general}
  Let~$f_1,\dotsc,f_r \in \bQ[x_1,\dotsc,x_n]$.
  Let $S$ be the semi-algebraic set defined by $f_1\geq 0, \dotsc, f_r\geq 0$.
  Assume that $S$~is bounded.
  Then, given $(f_1, \ldots, f_r)$ and a working precision $p \in \bN$,
  Algorithm \ref{algo:volume} ($\textsc{Volume}$)
  computes an interval containing $\vol_n(S)$
  of width $O(2^{-p})$ as $p \to \infty$ for fixed $f_1, \dots, f_r$
\end{theorem}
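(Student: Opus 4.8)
The plan is to verify that Algorithm~\ref{algo:volume} (\textsc{Volume}) is exactly an instance of the reduction sketched in the preceding discussion, and then to invoke Theorem~\ref{thm:volumerec} and Proposition~\ref{prop:spec-dsolve} for the quantitative claims. First I would observe that since $S$ is bounded, it is compact (being closed as a finite intersection of sets $\{f_i \geq 0\}$); this is needed to guarantee that the set $U = A \cap (I \times S)$ introduced in §\ref{sec:general-semi-algebraic} is bounded and that the integrals below converge. Then I would justify the choice of $\alpha$ made in line~4 of the algorithm: the set of exceptional values $\Sigma_f$ returned (or rather, a finite superset of which is returned) by \textsc{CriticalValues} is finite because $f = f_1\cdots f_r - t$ satisfies~\eqref{eq:R} — indeed $\partial f/\partial t = -1 \neq 0$ everywhere — so picking a rational $\alpha$ with $0 < \alpha < \min(\{\alpha_i : \alpha_i > 0\} \cup \{1\})$ yields an interval $I = (0,\alpha)$ disjoint from $\Sigma_f$, as required to apply Theorem~\ref{thm:volume-section}.

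Next I would carry over, essentially verbatim, the argument from §\ref{sec:general-semi-algebraic} that $U$ is a bounded union of connected components of $A \cap \proj^{-1}(I)$: for any $(\rho,x) \in A$ with $\rho \in I$ one has $f_1(x)\cdots f_r(x) = \rho + t > 0$ wait—more precisely $f_1(x)\cdots f_r(x) \geq \rho > 0$, so no $f_i$ vanishes and $U = A \cap (I \times \mathring S)$ is simultaneously relatively closed and open in $A \cap \proj^{-1}(I)$. Consequently Theorem~\ref{thm:volume-section} provides an operator $P \in \mathscr D$, depending only on $f$, annihilating the section-volume function $v : I \to \bR$, $v(\rho) = \vol_n(U \cap \proj^{-1}(\rho))$; this is the operator computed by \textsc{PicardFuchs}$(f,t)$ in line~6. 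A short separate check, which I would spell out, is that $\lim_{t \to 0^+} v(t) = \vol_n S$: the fibers $U \cap \proj^{-1}(\rho)$ increase to $\mathring S$ as $\rho \to 0^+$, and $\vol_n(\mathring S) = \vol_n(S)$ since the boundary of a semi-algebraic set has measure zero; monotone convergence then gives the limit. In particular $v$ has a finite limit at the left endpoint $\alpha = 0$ (in the notation of Proposition~\ref{prop:spec-dsolve}, with $J = (0,\alpha)$), so part~(\ref{item:dsolve}) of that proposition applies.

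With these structural facts in place, the correctness argument proceeds as follows. By Proposition~\ref{prop:spec-dsolve}(\ref{item:pickgoodpoints}), \textsc{PickGoodPoints}$(P, 0, \alpha)$ returns rational points $\rho_1,\dots,\rho_m \in I$ such that $[\,v(\rho_j) = s_j\,]_j$ is a system of good initial conditions for $P$ on $J$ defining $v_{|J}$. For each $j$, since $\rho_j \notin \Sigma_f$, assumption~\eqref{eq:R} holds for $f(\rho_j,-)$, and $U(\rho_j) = \{f_1\cdots f_r \geq \rho_j\} \cap S$ is a bounded — indeed compact — union of connected components of $A(\rho_j) = \{f_1\cdots f_r \geq \rho_j\}$ (no $f_i$ changes sign on such a component for $\rho_j > 0$); hence Theorem~\ref{thm:volumerec} applies, and the recursive call \textsc{Volume1} on line~9 returns an interval $\tilde s_j$ of width $O(2^{-p})$ containing $v(\rho_j)$. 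Feeding the resulting $2^{-p}$-approximation $\tilde{\mathscr I} = [\,y(\rho_j) = \tilde s_j\,]_j$ of $\mathscr I$ into \textsc{DSolve}$(P, \tilde{\mathscr I}, 0, p)$ and applying Proposition~\ref{prop:spec-dsolve}(\ref{item:dsolve}) yields an interval of width $O(2^{-p})$ containing $\lim_{t\to 0} v(t) = \vol_n S$, which is what the algorithm returns. The main obstacle — really the only nonroutine point — is the verification that the hypotheses of Theorems~\ref{thm:volume-section} and~\ref{thm:volumerec} are met at every level of the recursion, i.e.\ that the "union of connected components" and boundedness/compactness conditions are preserved under the substitution $t = \rho_j$; once the sign argument on $f_1\cdots f_r$ is made, this is straightforward, and the width bounds follow formally by combining the two propositions, with the $O(1)$ loss of precision in the recursive calls absorbed into the working precision exactly as in the discussion preceding Theorem~\ref{thm:volumerec}.
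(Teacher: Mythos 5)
Your argument is correct and follows essentially the same route as the paper's own proof, which is the discussion in \S\ref{sec:general-semi-algebraic} immediately preceding the theorem statement: set $f = f_1 \cdots f_r - t$, verify~\eqref{eq:R}, pick $I=(0,\alpha)$ disjoint from $\Sigma_f$, show $U$ is a bounded union of connected components of $A \cap \proj^{-1}(I)$ via the sign argument, and combine Theorem~\ref{thm:volume-section}, Proposition~\ref{prop:spec-dsolve} and Theorem~\ref{thm:volumerec} for the recursive calls. You usefully spell out two things the paper merely asserts, namely the compactness of~$S$ and the limit $\lim_{\rho\to 0^+}v(\rho)=\vol_n S$. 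One small imprecision in the latter: the fibers $U\cap\proj^{-1}(\rho)$ do not increase to all of $\mathring S$ but to $\mathring S \cap \{f_1\cdots f_r>0\}$ (interior points of $S$ where some $f_i$ vanishes are excluded for every $\rho>0$); this does not affect the conclusion because the excluded set lies in the hypersurface $\{f_1\cdots f_r=0\}$, which has measure zero.
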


\begin{remark}
  In case~$S$ has empty interior, Algorithm~\ref{algo:volume}
  returns zero.
  When $S$ is contained in a linear subspace of dimension $k <n$, one could in
  principle obtain the $k$-volume of~$S$ by computing linear equations defining
  the subspace (using quantifier elimination as
  in~\cite{KhachiyanPorkolab2000,SafeyElDinZhi2010}) and eliminating $n-k$
  variables.
  The new system would in general have algebraic
  instead of rational coefficients, though.
\end{remark}

Lastly, we note that a more direct symbolic computation of integrals on general semi-algebraic sets depending
on a parameter is possible with Oaku's algorithm~\cite{Oaku_2013}, based on the
effective theory of $\mathcal{D}$-modules. 

\section{Periods depending on a parameter}
\label{sec:peri-depend-param}

Let us now discuss in more detail the main black boxes used by the volume
computation algorithm.
In this section, we study how the volume of a section $U \cap \proj^{-1}(\rho)$
varies with the parameter~$t = \rho$.

\subsection{Picard-Fuchs equations}

Let~$R(t,x_1,\dotsc,x_n)$ be a rational function. A \emph{period of the
  parameter-dependent rational
  integral $\oint R(t,x_1,\dotsc,x_n)\,\ud x_1\dotsb\ud x_n$} is an analytic
function $\phi : \Omega\to \bC$, for some open subset $\Omega$ of~$\bR$ or~$\bC$ such
that for any~$s\in \Omega$ there is an $n$-cycle $\gamma\subset \bC^n$ and a neighborhood~$\Omega' \subset \Omega$ of~$s$ such that
for any~$t\in \Omega'$, $\gamma$ is disjoint from the poles of~$R(t,-)$ and
\begin{equation} \label{eq:period}
  \phi(t) = \int_\gamma R(t,x_1,\dotsc,x_n) \,\ud x_1 \dotsb \ud x_n.
\end{equation}
Recall that an $n$-cycle is a compact $n$-dimensional real submanifold of
$\bC^n$ and that such an
integral is invariant under a continuous deformation of the integration
domain~$\gamma$ as long as it stays away from the poles of~$R(t,-)$, as a
consequence of Stokes' theorem.
It is also well known that such a function~$\phi$ depends analytically on~$t$,
by Morera's theorem for example.

For instance, algebraic functions are periods:
if~$\phi : \Omega\to \bC$ satisfies a nontrivial
relation~$P(t,\phi(t))=0$, with square-free~$P\in\bC[t,x]$,
then $\phi(t)$ is a period by the residue theorem applied to
\[ \phi(t) = \frac{1}{2\pi i}\oint_{\gamma} \frac{x}{P(t,x)} \frac{\partial P}{\partial
  x}(t,x) \,\ud x \]
where $\gamma \subset \bC$ encloses~$\phi(t)$ and no other root of~$P$.
Indeed, the integrand decomposes as
$ \sum_{i=1}^{\deg_x P} x/(x-\psi_i(t))$,
where the functions~$\psi_i$ parametrize the roots of~$P(t,-)$, and, w.l.o.g.,
$\phi = \psi_1$.

Periods of rational functions are solutions of Fuchsian linear differential equations
with polynomial coefficients known as Picard-Fuchs equations. This was proved
in \cite{Picard_1902} in the case of three variables at most and a parameter and
generalized later, using either the finiteness of the algebraic De Rham
cohomology \cite[e.g.][]{Grothendieck_1966a,Monsky_1972,Christol_1985}
or the theory of D-finite functions \cite{Lipshitz_1988}.
The regularity of Picard-Fuchs equations is due to Griffiths \cite[see][]{Katz_1971}.

\begin{theorem}\label{thm:picard-fuchs}
  If~$\phi : \Omega\to \bC$ is the period of a rational integral then~$\phi$ is
  solution of a nontrivial linear differential equation with polynomial
  coefficients $P(\phi) = 0$, where the operator~$P$ belongs to the
  class~$\mathscr D$ introduced in~§\ref{sec:volume-section}.
\end{theorem}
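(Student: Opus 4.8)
The plan is to produce the operator $P$ in two stages: first I would establish that $\phi$ is D-finite, so that some nonzero $P \in \bQ[t][\ddt]$ annihilates it, and then upgrade $P$ to an element of $\mathscr D$ by exploiting that it is of geometric origin. Write the rational function as $R = N/Q$ with coprime $N, Q \in \bQ[t, x_1, \dotsc, x_n]$, set $\omega \eqdef R\,\ud x_1 \dotsb \ud x_n$, and let $U$ be the smooth affine $\bQ(t)$-variety $\bA^n_{\bQ(t)} \setminus \{Q = 0\}$.

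\emph{D-finiteness.} Here I would invoke the finiteness theorem for algebraic de Rham cohomology \cite{Grothendieck_1966a, Monsky_1972}: the $\bQ(t)$-vector space $H^n_{\mathrm{dR}}(U/\bQ(t))$ is finite-dimensional. It contains $[\omega]$ and all its images $[\nabla^k\omega]$, $k \geq 0$, under the Gauss--Manin connection $\nabla = \nabla_{\partial_t}$ --- concretely $\nabla^k\omega = \partial_t^k\omega$ is again an algebraic $n$-form on $U$, since $U$ does not vary with $t$ and the algebraic de Rham complex of the affine variety $U$ consists of forms with poles of arbitrary order along $\{Q=0\}$. Hence the $[\nabla^k\omega]$ are eventually $\bQ(t)$-linearly dependent, which yields a nonzero operator $P_0 \in \bQ(t)[\ddt]$ with $P_0 \cdot [\omega] = 0$, i.e.\ $P_0(\omega) = \ud\xi$ for some algebraic $(n-1)$-form $\xi$ on $U$. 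Multiplying by a common denominator $q \in \bQ[t]$ gives $P \eqdef q P_0 \in \bQ[t][\ddt] \setminus \{0\}$ with $P(\omega) = \ud(q\xi)$ (here $\ud$ is the relative differential over $\bQ(t)$, so scalars $q \in \bQ[t]$ pass through). Now fix $s \in \Omega$ with an associated $n$-cycle $\gamma$ and a neighbourhood $\Omega' \ni s$ on which $\gamma$ may be kept fixed and stays away from the poles of $R(t,-)$ and $\xi(t,-)$ (all supported on $\{Q(t,-)=0\}$); differentiation under the integral sign gives $\int_\gamma \nabla^k\omega = \phi^{(k)}$ on $\Omega'$, and Stokes' theorem gives $\int_\gamma \ud(q\xi) = 0$ because the cycle $\gamma$ has empty boundary. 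Hence $P(\phi) = \int_\gamma P(\omega) = 0$ on $\Omega'$, and therefore on all of $\Omega$. (Made effective via Griffiths--Dwork reduction, this is essentially the algorithm of \cite{BostanLairezSalvy_2013, Lairez_2016} used elsewhere in this paper; the theory of D-finite functions \cite{Lipshitz_1988} gives an alternative proof.)

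\emph{Regularity and rational exponents.} Since $P_0$ generates the left ideal annihilating $[\omega]$, the $\bQ(t)[\ddt]$-module $\bQ(t)[\ddt]/(P_0)$ embeds as the cyclic sub-connection of $H^n_{\mathrm{dR}}(U/\bQ(t))$ generated by $[\omega]$, and $H^n_{\mathrm{dR}}(U/\bQ(t))$ is the generic fibre of the algebraic Gauss--Manin connection of the family $\{Q\ne0\}\to\bA^1_t$, which is smooth of locally constant topology away from a finite set of parameters. For this step I would invoke Griffiths' regularity theorem \cite{Katz_1971}: this Gauss--Manin connection extends to a meromorphic connection on $\bP^1$ that is regular singular at every point, the point $t=\infty$ included. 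Regular singularity passes to sub-modules, quotient modules, and hence sub-quotients, so $P_0$ --- and likewise any of its right divisors, such as the minimal-order monic operator annihilating $\phi$ itself --- is Fuchsian; multiplying by $q$, a unit of $\bQ(t)$ and hence of $\bQ(t)[\ddt]$, changes neither the module nor this conclusion, so $P$ is Fuchsian too. For the exponents, the local monodromy of a local system of geometric origin is quasi-unipotent by the monodromy theorem \cite{Katz_1971}, so at each singular point the monodromy eigenvalues on the local solution space of $P$ are roots of unity $e^{2\pi i r}$ with $r \in \bQ$; since a local exponent $\theta$ at a regular singular point satisfies that $e^{2\pi i \theta}$ equals one of these eigenvalues, we get $\theta \in r + \bZ \subseteq \bQ$. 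Hence $P \in \mathscr D$.

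\emph{Main obstacle.} The first stage is routine once finiteness of de Rham cohomology is granted --- it is only differentiation under the integral sign plus Stokes, and it does not require $P$ to be minimal. The hard part will be the second stage: regularity is not a formal consequence of D-finiteness but rests on Griffiths' theorem, and the passage to $t = \infty$ requires the global, compactified point of view; the rationality of the exponents needs quasi-unipotence of the geometric local monodromy; and in both cases one must carry these properties from the Gauss--Manin connection on cohomology down to the scalar operator actually satisfied by $\phi$, which is where the module-theoretic bookkeeping (cyclic vectors, sub-quotients) comes in.
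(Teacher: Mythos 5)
Your argument is correct and runs along exactly the lines the paper itself indicates: the paper does not prove Theorem~\ref{thm:picard-fuchs} but presents it as a known result, citing finiteness of algebraic de Rham cohomology (Grothendieck, Monsky, Christol) or D-finiteness (Lipshitz) for the existence of $P$, and Griffiths' regularity theorem via Katz for the Fuchsian property. Your write-up fleshes out precisely those references --- Gauss--Manin plus Stokes for existence, regularity and quasi-unipotence of the local monodromy transported through cyclic sub-quotient $D$-modules for membership in $\mathscr D$ --- and the bookkeeping (minimal annihilator of $[\omega]$, passage to right divisors, clearing denominators introduces only apparent singularities with integer exponents) is sound.
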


Several algorithms are known and implemented to compute such Picard-Fuchs
equations \cite{Lairez_2016,Koutschan_2010,Chyzak_2000}.

\begin{theorem}[\cite{BostanLairezSalvy_2013}]
  \label{thm:algo-periods}
  A period of the form~\eqref{eq:period}
  is solution of a differential equation of order at most~$D^n$
   where~$D$ is the degree of~$R$; and one
   can compute such an equation in 
   $D^{O(n)}$ operations in~$\bQ$.
\end{theorem}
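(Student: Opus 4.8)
The plan is to interpret $\phi$ cohomologically and to exploit the finiteness of algebraic de Rham cohomology, following the effective Griffiths--Dwork method of~\cite{BostanLairezSalvy_2013}. Write $R = a/q^{k}$ with $a,q\in\bQ[t][x_1,\dots,x_n]$, $q$~squarefree and $d \eqdef \deg q\leq D$. For a fixed generic value of~$t$ the form $\omega_t = R(t,-)\,\ud x_1\dotsb\ud x_n$ is a rational $n$-form on $\bC^n$, regular away from the hypersurface $Z_t=\{q(t,-)=0\}$; since $\gamma$ is a cycle and exact forms integrate trivially, $\phi(t)=\int_\gamma\omega_t$ depends only on the class of $\omega_t$ in the top algebraic de Rham cohomology of $\bC^n\setminus Z_t$. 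Working uniformly in~$t$ over the field $\bQ(t)$, this cohomology is a finite-dimensional $\bQ(t)$-vector space~$V$. Differentiation in~$t$ sends rational $n$-forms to rational $n$-forms and is compatible with exactness, so every class $[\,\ddtp{i}R\,]$ lies in~$V$; hence $[R],[\ddt R],[\ddtp{2}R],\dots$ are $\bQ(t)$-linearly dependent after at most $\dim_{\bQ(t)}V+1$ terms, and a relation $\sum_{i=0}^{m}c_i(t)\,\ddtp{i}R\equiv 0$ modulo exact forms (with $c_i\in\bQ[t]$ after clearing denominators) yields, upon integrating over~$\gamma$, a nontrivial annihilating operator $L=\sum_i c_i\ddtp{i}$ for~$\phi$. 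That $L\in\mathscr D$ is Theorem~\ref{thm:picard-fuchs}.

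Making this effective requires controlling pole orders, which is the role of Griffiths--Dwork reduction. The difficulty is that $\ddt(a/q^{k})$ has a pole of order $k+1$, so naive iteration raises the pole order without bound; but if a numerator $b\in\bQ[t][x]$ lies in the Jacobian ideal $J_q=(\partial q/\partial x_1,\dots,\partial q/\partial x_n)$, writing $b=\sum_i b_i\,\partial q/\partial x_i$ one has the Stokes identity
\[
  \frac{b}{q^{j}}\,\ud x_1\dotsb\ud x_n
  \;\equiv\;
  \frac{1}{j-1}\,\frac{\sum_i \partial b_i/\partial x_i}{q^{j-1}}\,\ud x_1\dotsb\ud x_n
  \pmod{\text{exact forms}}.
\]
Splitting an arbitrary numerator as a combination of the $\partial q/\partial x_i$ plus a Gr\"obner normal form modulo $J_q$ (enlarged by~$q$ itself to absorb the singular points of~$Z_t$) trades pole order~$j$ for $j-1$ plus a bounded-degree remainder; iterating, every rational $n$-form reduces modulo exact forms to a $\bQ(t)$-linear combination of the forms $x^\alpha/q^{j}$ with $j$ below a fixed bound and $x^\alpha$ in the finite Gr\"obner staircase of~$J_q$. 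This exhibits~$V$ concretely, with $\dim_{\bQ(t)}V\leq D^{n}$: the staircase of the Jacobian ideal of a degree-$d$ form has $O(d^n)$ elements and only a bounded number of pole orders are needed.

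The algorithm then proceeds as follows: compute a Gr\"obner basis of $J_q$ (or of $J_q+(q)$); reduce $R,\ddt R,\ddtp{2}R,\dots$ to normal forms $N_0,N_1,\dots\in V$ by repeated application of the identity above; stop as soon as $N_0,\dots,N_m$ are $\bQ(t)$-linearly dependent, which occurs for some $m\leq\dim_{\bQ(t)}V\leq D^{n}$; solve the resulting homogeneous linear system over $\bQ(t)$ for the~$c_i$, clear denominators, and return $L=\sum_i c_i\ddtp{i}$. Correctness is immediate, since $L\cdot R$ is exact by construction and $\gamma$ is a cycle, so $L(\phi)=0$, and $\operatorname{ord}L\le D^{n}$. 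For the complexity, the Gr\"obner basis step costs $d^{O(n)}$ operations in~$\bQ$ when $\partial q/\partial x_1,\dots,\partial q/\partial x_n$ form a regular sequence, while the reductions and the final linear solve amount to linear algebra over $\bQ(t)$ in dimension~$D^{n}$, again $D^{O(n)}$; since $d\leq D$, the total is $D^{O(n)}$.

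The principal obstacle is the singular or degenerate case, where $Z_t$ is singular, or $J_q$ fails to be zero-dimensional or a complete intersection: then Griffiths--Dwork reduction need not terminate at a uniformly bounded pole order, the space~$V$ can be larger, and the Gr\"obner computation may blow up. The standard remedy---a generic linear change of coordinates putting~$q$ in Noether position together with a generic infinitesimal deformation making $Z_t$ smooth, or, alternatively, working directly with the localized de Rham complex of $\bQ[t][x_1,\dots,x_n,q^{-1}]$ and bounding the relevant pole order by independent means---restores both the bound $\dim_{\bQ(t)}V\le D^{n}$ and the $d^{O(n)}$ cost. This is the technical heart of the argument; once it is in place, everything else is bookkeeping.
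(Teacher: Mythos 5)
This statement is not proved in the paper; it is quoted as a black box from the cited reference (Bostan, Lairez, Salvy, ISSAC~2013), so there is no ``paper's own proof'' to compare against. Your sketch correctly reconstructs the mechanism of that reference: interpret $\phi$ as the pairing of a de Rham class with a locally constant cycle, exploit the finiteness of the de Rham cohomology of the parametrized hypersurface complement over $\bQ(t)$, use Griffiths--Dwork pole reduction via the Jacobian ideal to build a concrete finite-dimensional reduction space, and find the annihilating operator by linear algebra; the order bound $D^n$ and the $D^{O(n)}$ arithmetic cost come, as you say, from the size of the Gr\"obner staircase and the linear algebra. You are also right to identify the singular/non-complete-intersection situation as the technical crux and to gesture at deformation or localized-complex remedies; that is precisely what makes the cited work nontrivial, and your treatment of it is appropriately honest about being a sketch.

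One sentence in your proposal is wrong and should be struck, though it does not affect the validity of the argument for the statement as written. You assert ``That $L\in\mathscr D$ is Theorem~\ref{thm:picard-fuchs}.'' This does not follow: Theorem~\ref{thm:picard-fuchs} guarantees that the \emph{minimal} annihilator of the period lies in $\mathscr D$, not that any particular annihilator produced by the reduction does. The operator $L$ you construct is in general a left multiple of the Picard--Fuchs operator, and left multiples of Fuchsian operators need not be Fuchsian nor have rational exponents. The paper makes exactly this point in the paragraph following the theorem, which is why it brings in Lairez's minimization procedure and Dimca's conjecture to obtain an operator actually in $\mathscr D$. Since Theorem~\ref{thm:algo-periods} itself claims only the existence of \emph{some} annihilating equation of bounded order and cost, not membership in $\mathscr D$, deleting that sentence leaves your sketch intact.
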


Note however that the algorithm underlying this result might not return the
equation of minimal order, but rather a left multiple of \emph{the} Picard-Fuchs
equation. So there is no guarantee that the computed operator belongs
to~$\mathscr D$. On the other hand, Lairez's algorithm~\cite{Lairez_2016}
can compute a sequence of operators with non-increasing order which eventually
stabilizes to the minimal order operator. In particular, as long as the computed
operator is not in~$\mathscr D$, we can compute the next one, with the guarantee
that this procedure terminates. A conjecture of Dimca~\cite{Dimca_1991} ensures
that it terminates after at most~$n$ steps, leading to a~$D^{O(n)}$ complexity
bound as in Theorem~\ref{thm:algo-periods}.

\subsection{Volume of a section and proof of Theorem~\ref{thm:volume-section}}

We prove Theorem~\ref{thm:volume-section} as a consequence of
Theorem~\ref{thm:picard-fuchs} and the following result. It is probably well
known to experts but it is still worth an explicit proof.
We use the notation %
of \S \ref{sec:volume-semi-sets}.

\begin{theorem}\label{thm:volume-section2}
  If\/ $I \cap \Sigma = \varnothing$ and if\/ $U$ is bounded then
  the function~$\rho\in I \mapsto \vol_n U_\rho$ is a period of the rational
  integral
  \[ \frac{1}{2i\pi}\oint \frac{x_1}{f_\rho} \frac{\partial f_\rho}{\partial x_1}
  \ud x_1 \dotsb \ud x_n. \]
\end{theorem}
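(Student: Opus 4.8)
The plan is to identify $\vol_n U_\rho$ with a real integral over a cycle in $\bR^n$, then deform that real cycle into a complex $n$-cycle on which the stated rational integral computes the same value, and finally check that the resulting function of $\rho$ meets the definition of a period (local constancy of the homology class of the cycle, analyticity). First I would reduce to the case where $f$ is square-free: replacing $f$ by its square-free part changes neither the zero set $\{f_\rho = 0\}$ nor the sets $U_\rho$, and only shrinks $\Sigma$, so the hypotheses are preserved. With $f_\rho$ square-free and $\rho \notin \Sigma$, the hypersurface $Z_\rho = \{f_\rho = 0\} \subset \bR^n$ (or its complexification in $\bC^n$) is smooth. The set $U_\rho$ is a bounded union of connected components of $\{f_\rho \geq 0\}$, hence a compact domain in $\bR^n$ whose topological boundary is contained in $Z_\rho$ and is a smooth compact hypersurface.

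The core computation is to express $\vol_n U_\rho = \int_{U_\rho} \ud x_1 \dotsb \ud x_n$ via Stokes/the divergence theorem as a boundary integral. Writing the volume form as $\ud x_1 \dotsb \ud x_n = \ud(x_1\, \ud x_2 \dotsb \ud x_n)$, we get $\vol_n U_\rho = \int_{\partial U_\rho} x_1\, \ud x_2 \dotsb \ud x_n$, where $\partial U_\rho$ carries the induced orientation. Now I want to realize $\partial U_\rho$ as (a piece of) a cycle on which $\tfrac{1}{2i\pi} \tfrac{x_1}{f_\rho}\tfrac{\partial f_\rho}{\partial x_1}\, \ud x_1 \dotsb \ud x_n$ has the same period. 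The key local picture: near a smooth point of $Z_\rho$ where $\partial f_\rho/\partial x_1 \neq 0$, the $(n-1)$-form $x_1\, \ud x_2 \dotsb \ud x_n$ on $Z_\rho$ is the residue (in the Leray/Poincaré sense) of the meromorphic $n$-form $\omega_\rho \eqdef \tfrac{x_1}{f_\rho}\,\ud x_1 \dotsb \ud x_n$ on $\bC^n$, up to the factor $\tfrac{\partial f_\rho}{\partial x_1}$ being absorbed: indeed $\ud f_\rho = \sum_i \tfrac{\partial f_\rho}{\partial x_i}\,\ud x_i$, and on $\{f_\rho = 0\}$ one has $\operatorname{res}\bigl(\tfrac{g}{f_\rho}\,\ud x_1\dotsb\ud x_n\bigr) = \tfrac{g}{\partial f_\rho/\partial x_1}\,\ud x_2 \dotsb \ud x_n$ wherever $\partial f_\rho/\partial x_1 \neq 0$. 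Taking $g = \tfrac{x_1 \,\partial f_\rho/\partial x_1}{1}$ — i.e. using the integrand as written, which already carries the factor $\partial f_\rho/\partial x_1$ — the residue is exactly $x_1\, \ud x_2 \dotsb \ud x_n$. Hence by the residue theorem in several variables, $\int_{\partial U_\rho} x_1\,\ud x_2 \dotsb \ud x_n = \tfrac{1}{2\pi i}\int_{\gamma_\rho} \tfrac{x_1}{f_\rho}\tfrac{\partial f_\rho}{\partial x_1}\,\ud x_1 \dotsb \ud x_n$, where $\gamma_\rho$ is a tubular $n$-cycle winding once around $Z_\rho$ over the compact hypersurface $\partial U_\rho$. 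The subtle point is that $\partial f_\rho / \partial x_1$ may vanish on part of $\partial U_\rho$; this is where I would invoke the Leray coboundary construction, which produces $\gamma_\rho$ globally as the boundary of a tubular neighborhood of $\partial U_\rho$ in $\bC^n \setminus Z_\rho$ without needing a single coordinate to play a distinguished role, and check that the residue formula holds globally as an identity of the two integrals (both equal $\int_{\partial U_\rho}$ of the Leray residue of $\omega_\rho$, which one computes to be $x_1\,\ud x_2\dotsb\ud x_n$ where the chosen coordinate works, and patches consistently).

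Finally I would verify the period axioms: for fixed $\rho_0 \in I$, the cycle $\gamma_{\rho_0}$ lies in $\bC^n \setminus \{f_{\rho_0} = 0\}$; since $\rho_0 \notin \Sigma$ and $\Sigma$ is closed (finite, by~\eqref{eq:R} which holds here because $I \cap \Sigma = \varnothing$ — more precisely $\Sigma$ is finite under the running hypotheses as noted after~\eqref{eq:R}), for $\rho$ in a small neighborhood $\Omega'$ of $\rho_0$ the hypersurface $\{f_\rho = 0\}$ stays smooth and $\gamma_{\rho_0}$ remains disjoint from it, so we may use the single cycle $\gamma = \gamma_{\rho_0}$ for all $\rho \in \Omega'$; and on $\Omega'$ the integral $\int_{\gamma} \omega_\rho$ is the integral of a form depending rationally (hence holomorphically, away from poles) on $\rho$ over a fixed compact cycle, so it is analytic in $\rho$ by differentiation under the integral sign (or Morera, as the excerpt notes). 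It remains to check that this analytic function agrees with $\rho \mapsto \vol_n U_\rho$: equality holds because for real $\rho$ the boundary integral equals $\vol_n U_\rho$ by the divergence theorem computation above, and $\vol_n U_\rho$ is itself continuous (indeed real-analytic) in $\rho$ on $I$ since the compact domains $U_\rho$ vary nicely as $\rho$ avoids $\Sigma$ — so the two analytic functions coincide on $I$. The main obstacle I anticipate is the careful global bookkeeping of orientations and of the Leray coboundary/residue identity when $\partial f_\rho/\partial x_1$ vanishes somewhere on $\partial U_\rho$, together with making precise the claim that $\partial U_\rho$, with multiplicity one, is the correct real cycle (as opposed to $Z_\rho$ carrying extra components disjoint from $\partial U_\rho$, which contribute zero to the residue integral because $\gamma_\rho$ only wraps $\partial U_\rho$).
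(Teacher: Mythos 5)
Your overall approach — Stokes' formula to reduce $\vol_n U_\rho$ to a boundary integral over $\partial U_\rho \subset V(f_\rho)$, then the Leray coboundary/residue theorem to lift it to a tube integral of the rational form — is exactly the route taken in the paper. The square-free reduction at the outset is harmless but unnecessary: the hypothesis $I \cap \Sigma = \varnothing$ already guarantees that $\nabla f_\rho$ does not vanish on the zero set, which is all that is needed. Your invocation of~\eqref{eq:R} is misplaced — it is not a hypothesis of this theorem, and the finiteness of $\Sigma$ plays no role here — but this too does not affect the argument.

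The genuine gap is in the verification of the period axioms, and it is not the one you flag at the end. You correctly observe that the fixed compact cycle $\gamma = \gamma_{\rho_0}$ stays off the polar locus $V(f_\rho)$ for $\rho$ near $\rho_0$, so that $\rho \mapsto \int_\gamma \omega_\rho$ is analytic. But what the divergence-theorem computation and residue theorem give you is $\vol_n U_\rho = \tfrac{1}{2\pi i}\int_{\gamma_\rho}\omega_\rho$ with the \emph{moving} tube $\gamma_\rho$; to conclude you must show $\int_{\gamma_\rho}\omega_\rho = \int_{\gamma_{\rho_0}}\omega_\rho$, i.e.\ that the two tubes are homologous in $\bC^n \setminus V(f_\rho)$. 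Asserting that ``$U_\rho$ varies nicely'' or that $\vol_n U_\rho$ is continuous does not close this: two analytic functions agreeing at the single point $\rho_0$ need not agree on a neighborhood, and the continuity of $\vol_n U_\rho$ is not yet established (indeed, it is part of what the theorem delivers). The paper supplies the missing homotopy by Ehresmann's theorem: since $I \cap \Sigma = \varnothing$, the projection $\proj$ restricted to $\partial U \cap \proj^{-1}(I)$ is a proper submersion onto $I$, hence a locally trivial fibration; the resulting trivialization $h : I \times \partial U_{\rho_0} \to \bR^n$ carries $\partial U_{\rho_0}$ to $\partial U_\sigma$ continuously, and the corresponding Leray tubes $\tau(\sigma)$ deform into $\tau(\rho_0)$ while avoiding $V(f_\sigma)$ for $\sigma$ close to $\rho_0$. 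By contrast, the issues you anticipate — orientations, the vanishing of $\partial f_\rho/\partial x_1$, and the choice of primitive $x_1\,\ud x_2 \dotsb \ud x_n$ — are genuinely unproblematic once the Leray coboundary is used globally, as the paper notes.
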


\begin{proof}
  Let~$\rho\in I$.
  By Stokes' formula,
  \begin{equation*}\textstyle
    \vol_n U_\rho = \int_{U_\rho} \ud x_1 \dotsb \ud x_{n} = \oint_{\partial U_\rho} x_1 \ud x_2 \dotsb \ud x_{n},
  \end{equation*}
  where~$\partial U_\rho$ is the boundary of~$U_\rho$.
  Due to the regularity
  assumption $\rho\not\in \Sigma$, the gradient $\nabla_p f_\rho$ does not vanish on
  the real zero locus of~$f_\rho$, denoted~$V(f_\rho)$.
  Because $U_\rho$ is a union of
  connected components of~$A\cap \proj^{-1}(\rho)$,
  it follows that $\partial U_\rho$
  is a compact $(n-1)$-dimensional submanifold of~$\bR^{n+1}$ contained in~$V(f_\rho)$.

  For $\epsilon > 0$, let $\tau(\rho)$ be the \emph{Leray tube} defined by
  \[ \tau(\rho) \eqdef \left\{ p + u \nabla_p f_\rho \st p \in \partial U_\rho, u \in \bC
      \text{ and } |u| = \epsilon \right\}. \]
  This is an $n$-dimensional submanifold of~$\bC^n$.
  We choose $\epsilon$ small enough that~$\tau(\rho) \cap V(f_\rho) = \varnothing$:
  this is possible because $\nabla_p f$ does not vanish on~$\partial U_\rho$ which
  is compact.

  Let $R(\rho, x_1,\dotsc,x_n) = x_1 f_\rho^{-1} {\partial f_\rho}/{\partial x_1}$;
  observe that $\tau(\rho)$ does not cancel the denominator of $R(\rho,
  x_1,\dotsc,x_n)$. Leray's residue theorem~\cite{Leray_1959}  shows that
  \begin{align*}\textstyle
    2\pi i \oint_{\partial U_t} x_1 \ud x_2 \dotsb \ud x_{n}
     = & \textstyle\oint_{\tau(\rho)} \frac{\ud f_\rho}{f_\rho}
                          \wedge (x_1 \ud x_2\dotsb\ud x_{n}) \\
    = & \textstyle \oint_{\tau(\rho)} R(\rho, x_1,\dotsc,x_n) \, \ud x_1 \dotsb \ud x_n.
  \end{align*}
  (In Pham's \cite[Thm.~III.2.4]{Pham_2011} notation, we have $\gamma = \partial U_\rho$,
    $\delta\gamma = \tau(\rho)$, $\varphi =  f_ \rho^{-1} \ud f_\rho
    \wedge (x_1 \ud x_2\dotsb\ud x_{n})$,
    and $\operatorname{res}[\varphi] = x_1 \ud x_2\dotsb\ud x_{n}$.)
 
  To match the definition of a period and conclude the proof, it is enough to
  prove that, locally, the integration domain~$\tau(\rho)$ can be made independent
  of~$\rho$.
  And indeed, since $U$~is a union of connected components of
  $A \cap \proj^{-1}(I)$, we have $\partial U \subseteq {f=0}$.
  Therefore, since $I$~is connected and $I \cap \Sigma = \varnothing$, the
  restriction of the projection~$\proj$ defines a submersive map from
  $\partial U \cap \proj^{-1}(I)$ onto~$I$.
  Additionally, $\partial U$ is compact, hence this map is proper.
  Ehresmann's theorem then implies that there exists a continuous map
  $h : I \times \partial U_\rho \to \bR^n$ such that~$h(\sigma,-)$ induces a
  homeomorphism~$\partial U_\rho \simeq \partial U_\sigma$ for any~$\sigma\in I$.
  In particular, we have
  \[ \tau(\sigma) = \left\{ h(\sigma,p) + u \nabla_{h(\sigma,p)} f_\sigma \st  p \in \partial U_\sigma, u \in \bC
      \text{ and } |u| = \epsilon \right\}. \]
  This formulation makes it clear that~$\tau(\sigma)$ deforms continuously
  into~$\tau(\rho)$ as $\sigma$~varies.
  Since $\tau(\sigma)$ does not intersect the polar locus $V(f_\sigma)$ of $R(\sigma, -)$,
  neither does $\tau(\rho)$ when $\sigma$~and~$\rho$ are close enough,
  by compactness of~$\tau(\sigma)$ and continuity of the deformation.
  Therefore, given any $\rho \in I$, we have
  $\oint_{\tau(\sigma)} R(s, -) = \oint_{\tau(\rho)} R(\sigma, -)$
  for $\sigma$ close enough to~$\rho$.
\end{proof}

The choice of $x_1 \ud x_2 \dots \ud x_n$ as a primitive of
$\ud x_1 \dots \ud x_n$ in Theorem~\ref{thm:volume-section2}
is arbitrary, but of little consequence, since the Picard-Fuchs equation only
depends on the cohomology class of the integrand.

\subsection{Critical values}

Theorem~\ref{thm:volume-section} does not guarantee that~$v$ satisfies the
Picard-Fuchs equation on the whole domain where the equation is nonsingular.
It could happen that the solutions extend analytically across an exceptional
point, or that some of them have singularities between two consecutive
exceptional points.
As a consequence, we need to explicitly compute~$\Sigma$.

\begin{lemma}\label{lemma:sard-algo}
  There exists an algorithm which, given on input a polynomial
  $f\in \bQ[t, x_1, \ldots, x_n]$ of degree $D$
  satisfying~\eqref{eq:R},
  computes a polynomial $g\in \bQ[t]-\{0\}$ of degree $D^{O(n)}$ whose set of
  real roots contains $\Sigma$, using $D^{O(n)}$ operations in $\bQ$.
\end{lemma}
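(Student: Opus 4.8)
The plan is to compute $\Sigma$ by a textbook application of elimination theory, being careful about the degree bounds. Recall from~\eqref{eq:Sigma} that $\Sigma$ is the projection onto the $t$-axis of the real algebraic set
\[
W = \left\{ (t,x) \in \bR^{n+1} \st f(t,x) = 0,\ \tfrac{\partial f}{\partial x_1}(t,x) = 0,\ \dots,\ \tfrac{\partial f}{\partial x_n}(t,x) = 0 \right\}.
\]
First I would form the ideal $J = \langle f, \partial f/\partial x_1, \dots, \partial f/\partial x_n\rangle \subseteq \bQ[t, x_1, \dots, x_n]$ and seek a nonzero $g$ in the elimination ideal $J \cap \bQ[t]$; its real roots then contain the Zariski closure of $\proj(W_{\bC})$, hence $\Sigma$. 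The standard tool is a Gröbner basis for a block elimination order, but to get a clean degree-and-complexity statement I would instead invoke one of the effective elimination results tailored to this setting: either the geometric resolution / resultant-based algorithms (as in~\cite{BaPoRo06}), or simply a generic projection argument combined with subresultants. In any of these, eliminating $n$ variables from $n+1$ polynomials of degree~$D$ produces a polynomial in $\bQ[t]$ of degree $D^{O(n)}$ in $D^{O(n)}$ arithmetic operations, which is exactly the bound claimed.

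The second point to address is that the elimination ideal might be zero, i.e.\ $W_{\bC}$ might not be finite or might project onto all of~$\bC$; this is where assumption~\eqref{eq:R} enters and where I expect the only real subtlety to lie. Assumption~\eqref{eq:R} says that the system $f = \partial f/\partial t = \partial f/\partial x_1 = \dots = \partial f/\partial x_n = 0$ has no real solution; as noted in the excerpt, Sard's theorem then forces $\Sigma$ to be finite. But finiteness of $\Sigma$ over~$\bR$ is not quite the same as $J \cap \bQ[t] \neq \{0\}$: one must rule out that $W_{\bC}$ has a component dominating the $t$-line (which would contribute no real points, yet make the elimination ideal zero). The clean way around this is to work with a Zariski-constructible superset: apply the algorithm to compute, for generic~$t$, the (finite) fiber $W_t \subseteq \bC^n$, and take $g$ to be the product of the "bad locus" polynomial — the $t$-discriminant of the fiber — with any polynomial whose vanishing locus contains the values of~$t$ where the dimension of the fiber jumps. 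Each such polynomial lies in $\bQ[t]\setminus\{0\}$, has degree $D^{O(n)}$, and is computable within the stated complexity by the usual quantifier-elimination / critical-point machinery; their product is the desired~$g$. A slicker alternative, if one is willing to perturb: replace $f$ by $f$ together with the requirement that a generic linear form avoid the fiber, reducing to a zero-dimensional situation to which standard resultant bounds apply directly.

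So the key steps in order are: (i) write $\Sigma = \proj(W \cap \bR^{n+1})$ with $W$ as above; (ii) observe that~\eqref{eq:R} plus Sard gives $\Sigma$ finite, and more importantly that the fiber $W_t$ is generically finite, with both the degree of the generic fiber and the "jump locus" controlled; (iii) run an effective elimination (Gröbner, geometric resolution, or iterated subresultants) to produce $g \in \bQ[t]\setminus\{0\}$ of degree $D^{O(n)}$ in $D^{O(n)}$ operations in~$\bQ$ with $V_\bR(g) \supseteq \Sigma$; (iv) conclude. The main obstacle is step~(ii)/(iii): guaranteeing \emph{a priori} that the output polynomial is nonzero — i.e.\ that no complex component of $W$ dominates the parameter line — and packaging the degree bookkeeping so that the final degree really is $D^{O(n)}$ rather than $D^{O(n^2)}$ or worse. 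Everything else is a routine appeal to known complexity bounds for real algebraic geometry, and since the lemma only asks for a superset of~$\Sigma$ (the main algorithm tolerates spurious roots, discarding them when the Picard-Fuchs solutions turn out to be analytic there), we have the freedom to be generous and multiply in extra factors whenever that simplifies the argument.
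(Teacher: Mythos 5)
Your proposal reaches the right bottleneck — the elimination ideal $J \cap \bQ[t]$ can indeed be zero because a positive-dimensional component of $W_{\bC}$ may dominate the $t$-line even when~\eqref{eq:R} holds over~$\bR$ (take $f = (x_1^2+t^2+1)^2$, so that $W_{\bC}$ is a curve with no real points) — but the fix you sketch (products of discriminants and ``jump-locus'' polynomials) is left too vague to establish nonvanishing of~$g$ or the claimed degree bound. The paper sidesteps the obstruction by working over~$\bR$ throughout. It replaces the system defining~$W$ by the single polynomial $h = f^2 + \sum_i (\partial f/\partial x_i)^2$, whose real zero set is exactly $W\cap\bR^{n+1}$. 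Since~\eqref{eq:R} and Sard's theorem make $\Sigma = \proj(W\cap\bR^{n+1})$ finite, the continuous projection sends each connected component of $\{h=0\}$ to a single point of~$\Sigma$, so a list of sample points, one per component, projects onto \emph{all} of~$\Sigma$. Algorithm~13.3 of~\cite{BaPoRo06} produces such sample points encoded by a rational parametrization $(P,F,G_1,\dotsc,G_n)$ of degree $D^{O(n)}$ in $D^{O(n)}$ arithmetic operations, and one then takes $g = \mathrm{Res}_y\bigl(P(y),\,F(y)-t\,P'(y)\bigr)$, which is nonzero because $P$ is square-free. This is a genuinely different route from yours: critical-point sampling over~$\bR$ rather than ideal-theoretic elimination over~$\bC$, and it buys a uniform nonvanishing guarantee that your approach would still have to argue for case by case.
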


\begin{proof}
  Recall that, when \eqref{eq:R} holds, the set~$\Sigma$ is finite. Our goal is
  to write $\Sigma$ as the root set of a univariate polynomial~$g$. Consider the
  polynomial
  $ h = f^2+ ({\partial f}/{\partial x_1})^2+ \cdots + ({\partial f}/{\partial x_n})^2. $
  We start by computing at
  least one point in each connected component of the real algebraic set defined
  by $h = 0$ using \cite[Algorithm 13.3]{BaPoRo06}. By \cite[Theorem
  13.22]{BaPoRo06}, this algorithm uses $D^{O(n)}$ operations. It returns a
  rational parametrization: polynomials~$P$, $F$, $G_1,\dotsc, G_n$ in
  $\bQ[y]$ of degree $\leq D^{O(n)}$ such that $P$ is square-free and the set of
  points
  \[ \left\{
    {P'(\xi)^{-1}}\bigl(F(\xi), G_1(\xi), \dotsc, G_n(\xi)\bigr) \in \bR^{n+1}
    \st \xi \in \bR, P(\xi) = 0
    \right\} \]
  meets every connected component of the zero set of~$h$. In particular, $\Sigma
  = \left\{ F(\xi) / P'(\xi) \st \xi \in \bR, P(\xi) = 0 \right\}$. As a polynomial~$g$, we take
  the resultant with respect to~$y$ of~$P(y)$ and~$F(y)-t P'(y)$: its set of roots
  contains~$\Sigma$. Since $P$ and $F$ have
  degree $D^{O(n)}$, this last step also uses $D^{O(n)}$ operations
  in $\bQ$ \cite{GathenGerhard_1999}.
\end{proof}

\section{Numerics}
\label{sec:numer-comp-with}

Let us turn to the numerical part of the main algorithm.
It is known~\cite{ChudnovskyChudnovsky_1990,Hoeven_2001} that Fuchsian
differential equations with coefficients in~$\bQ[t]$ can be solved numerically
in quasi-linear time w.r.t. the precision.
Yet, some minor technical points must be addressed to apply the
results of the literature to our setting.
We start with reminders on the theory of linear ODEs in the complex domain
\cite[e.g.][]{Poole1936,Hille1976}.
Consider a linear differential operator
\begin{equation}\label{eq:diffop}
  P = p_m(t) \ddtp{m} + \dotsb + p_1(t) \ddt + p_0(t)
\end{equation}
of order~$m$ with coefficients in $\bQ[t]$.

Recall that $u \in \bC$ is a \emph{singular point} of~$P$ when the leading
coefficient~$p_m$ of~$P$ vanishes at~$u$.
A point that is not a singular point is called \emph{ordinary}.
Singular points are traditionally classified in two categories:
a singular point $u \in \bC$ is a \emph{regular singular} point of~$P$ if,
for $0 \leq i < m$, its multiplicity as a pole of $p_i/p_m$ is at most~$m-i$,
and an \emph{irregular singular} point otherwise.
The point at infinity in $\bP^1(\bC)$ is said to be ordinary, singular, etc.,
depending on the nature of~$0$ after
the change of variable $t \mapsto t^{-1}$.
An operator with no irregular singular point in $\bP^1(\bC)$ is called
\emph{Fuchsian}.

Fix a simply connected domain $\Omega \subseteq \bC$ containing only ordinary
points of~$P$,
and let $W$ be the space of analytic solutions
$y: \Omega \to \bC$ of the differential equation $P(y)=0$.
According to the Cauchy existence theorem for linear analytic ODEs,
$W$ is a complex vector space of dimension~$m$.
A particular solution $y \in W$ is determined by the initial values
$y(u), y'(u), \dotsc, y^{(m-1)}(u)$
at any point~$u \in \Omega$.

At a singular point, there may not be any nonzero analytic solution.
Yet, if $u$~is a \emph{regular} singular point,
the differential equation still admits $m$~linearly
independent solutions defined in the slit disk
$\left\{ u + \zeta \st |\zeta| < \eta, \zeta \notin \bR_- \right\}$
for small enough~$\eta$
and each of the form
\begin{equation} \label{eq:regsingsol}
y(u + \zeta)
  = \zeta^{\gamma} \sum_{k=0}^{\ell} y_k(\zeta) \log(\zeta)^k
  = \sum_{k=0}^{\ell} \sum_{\nu \in \gamma+\bN}^{\infty} y_{k,\nu} \zeta^{\nu}
    \log(\zeta)^k
\end{equation}
where $\gamma \in \bar\bQ$, $\ell \in \bN$, and $y_{k,\gamma} = y_k(0) \neq 0$
for exactly one~$k$ \cite[§16]{Poole1936}.
The functions~$y_k$ are analytic for~$|\zeta| < \eta$ (including at~$0$).
The algebraic numbers~$\gamma$ are called the \emph{exponents} of~$P$ at~$u$.

Suppose now that $u$~is either an ordinary point of~$P$ lying in the topological
closure~$\bar \Omega$ of~$\Omega$, or a regular singular point of~$P$ situated
on the boundary of~$\Omega$.
As a result of the previous discussion, we can choose a distinguished basis
$B_u = (\phi_{u,1}, \dots, \phi_{u,m})$ of~$W$ in which each~$\phi_{u,i}$ is
characterized by the leading monomial%
\footnote{
  More precisely, denoting $\lambda_{k,\nu}(y) = y_{k,\nu}$
  in~\eqref{eq:regsingsol}, there are~$m$ computable pairs $(\gamma_i, k_i)$
  such that, for all $i$, we have
  $\lambda_{k_i,\gamma_i}(\phi_{u,i}) = 1$,
  $\smash{\lambda_{k_j,\gamma_j}}(\phi_{u,i}) = 0$ for $j \neq i$, and
  $\lambda_{k,\nu}(\phi_{u,i}) = 0$ whenever $\nu - \gamma_i \notin \bN$.
}
$({t-u})^{\gamma} \log({t-u})^k$
of its local expansion~\eqref{eq:regsingsol} at~$u$.
At an ordinary point~$u$ for instance,
the coefficients of the decomposition of a solution~$y$
on~$B_u$ are $y^{(i)}(u)/i!$, that is, essentially the
classical initial values.
Observe that when no two exponents~$\gamma$ have the same imaginary part,
the elements of~$B_\alpha$ all have distinct asymptotic behaviours as $t \to u$.
In particular, at most one of them tends to a nonzero finite limit.
As Picard-Fuchs operators have real exponents according to
Theorem~\ref{thm:picard-fuchs}, this observation applies to them.

Let $u' \in \bar\Omega$ be a second point subject to the same restrictions
as~$u$.
Let $\Delta(u,u') \in \bC^{m \times m}$ be the transformation matrix from~$B_u$
to~$B_{u'}$.
The key to the quasi-linear complexity of our algorithm is that the entries of
this matrix can be computed efficiently, by solving the ODE with a Taylor method
in which sums of Taylor series are computed by binary splitting
\cite[item 178]{HAKMEM}, \cite{ChudnovskyChudnovsky_1990}.
The exact result we require is due to
van der Hoeven~\cite[Theorems~2.4 and 4.1]{Hoeven_2001};
see also \cite{Mezzarobba2011} for a detailed algorithm and some further
refinements.
Denote by $\mop{M}(n)$ the complexity of $n$-bit integer multiplication.

\begin{theorem}[\cite{Hoeven_2001}] \label{thm:transition matrix}
  For a fixed operator~$P$ and fixed algebraic numbers~$u,u'$ as above,
  one can compute the matrix $\Delta(u,u')$ with an entry-wise error bounded
  by~$2^{-p}$ in $O(\mop{M}(p (\log p)^2))$ operations.
\end{theorem}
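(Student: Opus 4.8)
The plan is to follow van der Hoeven's analysis, reducing the computation of~$\Delta(u,u')$ to a bounded number of one‑step Taylor expansions at ordinary points, together with a Frobenius‑type expansion at each regular singular endpoint, and to evaluate all the series that occur by binary splitting. First I would localize the problem along a path. Fix a polygonal path from~$u$ to~$u'$ inside~$\bar\Omega$ avoiding the finitely many singular points of~$P$, and subdivide it into points $u = u_0, u_1, \dots, u_k = u'$ with consecutive points at distance at most, say, half the distance from~$u_i$ to the singular locus of~$P$; for fixed~$P$, $u$, $u'$ this requires only $k = O(1)$ points. Since the basis changes compose, $\Delta(u,u') = \Delta(u_{k-1},u_k)\cdots\Delta(u_0,u_1)$, and a product of $O(1)$ matrices with $2^{-p-O(\log p)}$‑accurate entries costs $O(\mop{M}(p))$ operations, it suffices to compute each $\Delta(u_i,u_{i+1})$ at precision $p + O(\log p)$. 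When $u_i$ and $u_{i+1}$ are both ordinary, the $j$‑th column of $\Delta(u_i,u_{i+1})$ collects $\phi_{u_i,j}^{(\ell)}(u_{i+1})/\ell!$ for $0 \le \ell < m$, where $\phi_{u_i,j} \in B_{u_i}$; these are read off from the truncated Taylor expansion of $\phi_{u_i,j}$ at~$u_i$, evaluated at~$u_{i+1}$.

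\textbf{Local Taylor step.} Substituting $y = \sum_{\nu\ge 0} c_\nu (t-u_i)^\nu$ into $P(y)=0$ yields a linear recurrence $\sum_\ell q_\ell(\nu)\, c_{\nu+\ell} = 0$ with coefficients $q_\ell$ polynomial in~$\nu$ of bounded degree, whose leading coefficient (the indicial polynomial shifted) does not vanish on~$\bN$ because $u_i$ is ordinary. Writing this recurrence in companion form, the vector $(c_\nu,\dots,c_{\nu+m-1})$ is a product of matrices $M(\nu)$ with entries rational in~$\nu$, and the partial sum $\sum_{\nu<N} c_\nu z^\nu$ together with its first $m-1$ derivatives (for $z = u_{i+1}-u_i$) is extracted from a single product of~$N$ such matrices. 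After clearing denominators and scaling~$z$ to an algebraic integer, this product is computed by binary splitting over a balanced tree: each of the $O(\log N)$ levels amounts to integer matrix products with entries of bit size $O(N\log N)$, so, by superadditivity of~$\mop{M}$, each level costs $O(\mop{M}(N\log N))$ and the whole evaluation $O(\mop{M}(N\log N)\log N)$. A majorant series for~$P$ shows the radius of convergence at~$u_i$ is at least the distance to the singular locus, hence (with $|z|$ at most half that) $|c_\nu z^\nu|$ decays geometrically and $N = O(p)$ terms give a $2^{-p}$ tail; since $\mop{M}(p(\log p)^2) \ge (\log p)\,\mop{M}(p\log p)$, this is $O(\mop{M}(p(\log p)^2))$ operations, the claimed bound. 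Rounding errors in the matrix products are absorbed by a handful of guard digits, or equivalently by running the same scheme in ball arithmetic with the a priori growth bound above.

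\textbf{Regular singular endpoints.} If $u$ (or $u'$) is a regular singular point on the boundary of~$\Omega$, replace the ordinary‑point expansion by the Frobenius construction: the exponents~$\gamma$ are the roots of the indicial polynomial, grouped so that two exponents lie in the same group iff they differ by an integer; for each group one forms a parametrized series $\zeta^{\gamma}\sum_\nu c_\nu(\gamma)\zeta^\nu$ whose coefficients satisfy a polynomial recurrence in~$\nu$ with $\gamma$ as a parameter, and one recovers the logarithmic solutions $\zeta^\gamma\sum_k y_k(\zeta)\log(\zeta)^k$ of~\eqref{eq:regsingsol} by differentiating this series with respect to~$\gamma$ up to the multiplicity of the group. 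These coefficient recurrences are again handled by binary splitting; the transcendental factors $\zeta^\gamma = \exp(\gamma\log\zeta)$ and the powers of $\log\zeta$ are evaluated to precision~$p$ by standard quasi‑linear algorithms after an accurate computation of the algebraic number~$\gamma$; and a triangular change of basis of size bounded in terms of~$P$ normalizes the solutions so that each carries its prescribed leading monomial $(t-u)^{\gamma}\log(t-u)^k$, producing~$B_u$. The matrix $\Delta(u,u_1)$ is then obtained, as before, by expanding these solutions and evaluating them at the first ordinary point~$u_1$.

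\textbf{Main obstacle.} As is typical for rigorous high‑precision ODE solving, the algebra of binary splitting is routine; the work lies in the effectivity and uniformity of the error analysis. One must produce explicit majorant bounds certifying that $N = O(p)$ terms and $k = O(1)$ continuation steps actually deliver precision~$p$, and track the precision loss through every matrix multiplication and every analytic‑continuation step so that the final accuracy is genuinely $2^{-p}$ and not $2^{-p+\Theta(p)}$. At regular singular points there is the extra task of bounding the number of logarithmic terms and the size of the normalizing basis change, and of evaluating the exponents and the factors $\zeta^\gamma$ with guaranteed error. All of this is carried out in detail in~\cite{Hoeven_2001} and refined in~\cite{Mezzarobba2011}; here we simply invoke the result.
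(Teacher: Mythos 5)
The paper does not give its own proof of this result: it is quoted directly from van der Hoeven \cite[Theorems~2.4 and 4.1]{Hoeven_2001}, with a pointer to \cite{Mezzarobba2011} for a detailed algorithm and refinements. Your sketch is a faithful outline of the cited argument --- analytic continuation of the transition matrix along a polygonal path with $O(1)$ steps for fixed $P$, $u$, $u'$; local Taylor (resp.\ Frobenius) expansions summed by binary splitting, with the superadditivity of $\mop{M}$ turning the $O(\mop{M}(p\log p)\log p)$ binary-splitting cost into the stated $O(\mop{M}(p(\log p)^2))$ bound; and the observation that the real work (explicit majorants, precision tracking through continuation steps, handling logarithmic solutions and the factors $\zeta^\gamma$ at regular singular endpoints) is exactly what the cited references carry out --- so it matches the paper's stance of invoking the result rather than reproving it.
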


Since $P$ is linear, this result suffices to implement the procedure
\textsc{DSolve} required by the main algorithm.
More precisely, suppose that $\textsc{TransitionMatrix}(P, u, u', p)$ returns a
matrix of complex intervals of width~$O(2^{-p})$ that encloses $\Delta(u,u')$
entry-wise.

\begin{definition}
  A system of \emph{good initial conditions} for~$P$ on~$\Omega$,
  denoted $[\lambda_j(y) = s_j]_{i=0}^{m'}$,
  is a finite family  of pairs $(\lambda_j, s_j)$ where $s_j \in \bC$ and
  $\lambda_j$ is a linear
  form that belongs to the dual basis of~$B_u$ for some algebraic
  point~$u \in \bar \Omega$ (which may depend on~$j$),
  with the property that $\lambda_1, \dots, \lambda_{m'}$ span the dual
  space of~$W$.

  A system of good initial conditions on $(\alpha, \beta) \subset \bR$ is a
  system of good initial conditions on
  $(\alpha, \beta) + i \, (0, \epsilon)$ for some $\epsilon > 0$.
\end{definition}

In other words, a system of good initial conditions is a choice of
coefficients of local decompositions of a solution of~$P$ whose values determine
at most one solution, and of prescribed values for these coefficients.
When the system is compatible, we say that it \emph{defines} the unique solution
of~$P$ that satisfies all the constraints.
Let us note in passing the following fact, which was used in
§\ref{sec:general-semi-algebraic}.

\begin{lemma} \label{lemma:ini-integral}
Let $u_1, \dots, u_{m'}$ be ordinary points of~$P$ such that
$\mathscr I = [ y(u_i) = s_i ]_i$
is a system of good initial conditions for $P$ on $\Omega$,
and let $u_0 \in \bar \Omega$.
Then
$\mathscr I' = [ y(u_0) = s_0, y'(u_1) = s_1, \dots, y'(u_{m'}) = s_{m'} ]$
is a system of good initial conditions for~$P\ddt$ on~$\Omega$.
\end{lemma}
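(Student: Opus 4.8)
The plan is to show that the constraints in $\mathscr I'$ cut out at most one solution of $P\ddt$, i.e.\ that the linear forms $y \mapsto y(u_0)$ and $y \mapsto y'(u_i)$ span the dual of the solution space $W'$ of $P\ddt$ on $\Omega$. First I would observe that $W'$ has dimension $m+1 = \dim W + 1$: indeed, $z = y'$ satisfies $P(z) = 0$ iff $P\ddt (y) = 0$, so the map $y \mapsto y'$ sends $W'$ onto $W$ with kernel exactly the constants. Thus $W' = \{\, y : \Omega \to \bC \text{ analytic} \st y' \in W \,\}$, and any $y \in W'$ is determined by the pair $(y', c)$ where $c \in \bC$ records, say, the value $y(u_0)$ (this makes sense because $u_0 \in \bar\Omega$ and elements of $W'$, being antiderivatives of functions in $W$, extend continuously — in fact analytically on a slit neighborhood, and here $u_0$ is just required to be a point where the relevant limit exists; since we only use $u_0$ to pin down the additive constant, any point of $\bar\Omega$ at which all elements of $W'$ have a well-defined value works — and ordinary or boundary regular-singular points qualify).

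Next I would argue the spanning property directly. Suppose $y \in W'$ satisfies all the homogeneous constraints $y(u_0) = 0$ and $y'(u_i) = 0$ for $i = 1, \dots, m'$. Then $z \eqdef y' \in W$ satisfies $z(u_i) = 0$ for all $i$; since $\mathscr I = [\,y(u_i) = s_i\,]_i$ is a system of good initial conditions for $P$ on $\Omega$, the forms $y \mapsto y(u_i)$ span the dual of $W$, so $z = 0$. Hence $y$ is constant on $\Omega$, and $y(u_0) = 0$ forces $y = 0$. This shows the $m'+1$ forms appearing in $\mathscr I'$ have trivial common kernel on $W'$, hence span its dual, which is exactly the defining condition for a system of good initial conditions — provided the forms are of the admissible type (members of dual bases $B_u$ at algebraic points $u \in \bar\Omega$).

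That last point is the only mild subtlety: I must check that "evaluation at an ordinary point $u_i$" and "evaluation of the derivative at an ordinary point $u_i$" are legitimate coordinate forms for $P\ddt$ in the sense of the definition of good initial conditions. At an ordinary point $u_i$ of $P$, which is also an ordinary point of $P\ddt$ (the leading coefficient is unchanged up to the factor contributed by $\ddt$, and I would note $P\ddt$ is still Fuchsian with rational exponents since $P \in \mathscr D$, as already used in §\ref{sec:general-semi-algebraic}), the distinguished basis $B_{u_i}$ consists of the Taylor basis, whose dual forms are $y \mapsto y^{(k)}(u_i)/k!$ for $0 \le k \le m$. So both $y \mapsto y(u_i)$ and $y \mapsto y'(u_i)$ lie in the span of a single dual basis $B_{u_i}$, as required; and $u_0$, being a point of $\bar\Omega$ where the value functional is defined, likewise contributes the admissible form $y \mapsto y(u_0)$. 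Assembling these observations completes the argument. I expect no real obstacle here — the heart is the dimension count $\dim W' = \dim W + 1$ together with the antiderivative bijection $W' / \bC \xrightarrow{\sim} W$; the rest is bookkeeping about which functionals are "good."
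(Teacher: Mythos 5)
Your proposal is correct and follows essentially the same route as the paper: both hinge on the observation that $y \mapsto y'$ maps solutions of $P\ddt$ onto solutions of $P$ with kernel the constants, and then check that the functionals $y\mapsto y'(u_i)$ and $y\mapsto y(u_0)$ are admissible dual-basis elements. The paper phrases the spanning property directly ("a solution of $P\ddt$ is characterized by\dots") and makes explicit that $P\ddt$ has order $\geq 2$ so that $y\mapsto y'(u_i)$ really is a member of the dual basis $B_{u_i}$, a point you treat a bit loosely ("lie in the span") but which amounts to the same thing since these functionals are dual basis elements, not merely linear combinations of them.
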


\begin{proof}
The derivative $y \mapsto y'$ maps the solution space of $P \ddt$ to that
of~$P$, and its kernel consists exactly of the constant functions.
By assumption, a solution of~$P$ is completely defined by its values at
$u_1, \dots, u_m$, hence a solution of~$P \ddt$ is characterized by the
values of its derivative at the same points, along with its limit at~$u_0$.
Because $P \ddt$ has order at least~$2$ (otherwise, $\mathscr I$ would not be
a system of good initial conditions), the conditions~$y'(u_i)$ are of the
form $\lambda(y)=s$ with $\lambda$ belonging to the dual basis of some~$B_u$, as
required.
So is the condition $y(u_0)=s_0$ since $P \ddt$ has solutions with a nonzero
finite limit at~$u_0$.
\end{proof}

\begin{algorithm}[t]
    \caption{Solution of $P(y)=0$}
    \label{algo:dsolve}
    \begin{algorithmic}[1] %
      \Procedure{DSolve}{$P,
          [ \phi^\ast_{u_i, j_i}(y) = \tilde s_i ]_{i=1}^{m},
          \alpha, p$}
      \State
      \Comment $\phi^\ast_{u, j}$ is the linear form dual to the element $\phi_{u,j}$ of $B_u$
      \If{$B_\alpha$ has an element of leading monomial~$1$}
        \State{$j_0 \gets \text{its index}$}
      \Else\ 
        \textbf{return} $0$
      \EndIf
      \State $u_0 \gets \alpha$; $\tilde \Delta_0 \gets \mop{id}$
      \For{$1\leq i \leq m$}
        \Comment using interval arithmetic
        \State \label{step:transition matrix}
               $\tilde \Delta_{i} \gets \textsc{TransitionMatrix}(P, u_{i-1}, u_i, p)
               \cdot \tilde \Delta_{i-1}$
        \State $\tilde \Lambda_i \gets \text{$j_i$th row of $\Delta_i$}$
      \EndFor
      \State \label{step:solve}
             solve the linear system $\tilde \Lambda_i \cdot \tilde c = \tilde s_i$,
             $1 \leq i \leq m$ (or fail)
      \State \textbf{return} the real part of $\tilde c_{j_0}$
      \EndProcedure
    \end{algorithmic}
\end{algorithm}

Algorithm~\ref{algo:dsolve} evaluates the solution of an operator~$P$
given by a system of good initial conditions.
Note that the algorithm is allowed to fail.
It fails if the intervals $\tilde \Lambda_i$ are not accurate enough for the
linear algebra step on line~\ref{step:solve} to succeed,
or if the linear system, which is in general over-determined, has no solution.
The following proposition assumes a large enough working precision~$p$ to
ensure that this does not happen.
Additionally, we only require that the output be accurate to within~$O(2^{-p})$,
so as to absorb any loss of precision resulting from numerical stability issues
or from the use of interval arithmetic.

\begin{proposition} \label{prop:dsolve}
  Suppose that the operator~$P$ is Fuchsian with real exponents.
  Let $\alpha < \beta$ be real algebraic numbers, and let $y$ be a real analytic
  solution of $P(y)=0$ on the interval $(\alpha, \beta)$ such that $y(t)$ tends
  to a finite limit as $t \to \alpha$.
  Let~$\mathscr I = [ \lambda_i(y) = s_i ]$ be a system of good initial
  conditions for~$P$ on $(\alpha, \beta)$ that defines~$y$.

  Given the operator~$P$, the point~$\alpha$, a \emph{large enough} working
  precision~$p \in \bN$, and an approximation
  $\tilde{\mathscr I} = [ \lambda_i(y) = \tilde s_i ]$
  of~$\mathscr I$
  where $\tilde s_i$ is an interval of width at most $2^{-p}$
  containing~$s_i$,
  $\textsc{DSolve}(P, \tilde{\mathscr I}, \alpha, p)$
  (Algorithm~\ref{algo:dsolve})
  computes a real interval of width $O(2^{-p})$ containing
  $\lim_{t \to \alpha} y(t)$ in time
  $O(\mop{M}(p (\log p)^2))$.
\end{proposition}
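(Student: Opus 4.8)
The plan is to follow the coordinate vector of~$y$ in the distinguished basis $B_\alpha$ at the left endpoint and to show that Algorithm~\ref{algo:dsolve} recovers the coordinate carrying the limit, with the stated accuracy and cost. I would first fix $\Omega = (\alpha,\beta) + i(0,\epsilon)$ with $\epsilon$ small enough that $\Omega$ contains only ordinary points of~$P$. Since $y$ is real analytic on $(\alpha,\beta)$, it has a convergent Taylor expansion at each point there, hence extends holomorphically to a complex neighbourhood of the interval, and for small~$\epsilon$ its restriction to~$\Omega$ lies in the solution space~$W$. Write $y = \sum_{i=1}^{m} c_i \phi_{\alpha,i}$ in~$B_\alpha$. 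All the $\phi_{\alpha,i}$ and~$y$ are real-valued on $(\alpha,\beta)$ --- the Frobenius series at the real point~$\alpha$ have real coefficients, the exponents being rational and the coefficients of~$P$ rational --- so $c\in\bR^m$, and the real part taken at the end of the algorithm merely discards interval-arithmetic noise.

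The structural core of the argument is to identify $\lim_{t\to\alpha^+} y(t)$ with a single entry of~$c$. Because $P$ is Fuchsian with real exponents, each $\phi_{\alpha,i}$ has a leading monomial $(t-\alpha)^{\gamma_i}\log(t-\alpha)^{k_i}$ with $\gamma_i\in\bQ$, and the pairs $(\gamma_i,k_i)$ are pairwise distinct by construction of~$B_\alpha$. As $t\to\alpha^+$, the element $\phi_{\alpha,i}$ tends to~$0$ if $\gamma_i>0$, is unbounded if $\gamma_i<0$ or if $\gamma_i=0<k_i$, and tends to the nonzero finite limit~$1$ if $(\gamma_i,k_i)=(0,0)$; the last case occurs for at most one index, which the algorithm names~$j_0$. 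Distinct leading monomials give asymptotically independent behaviours, so there is no cancellation among the unbounded $\phi_{\alpha,i}$: the hypothesis that $y$ has a finite limit forces $c_i=0$ for every unbounded~$\phi_{\alpha,i}$, whence $\lim_{t\to\alpha^+}y(t)=c_{j_0}$ when $j_0$ exists and~$0$ otherwise. This is exactly the dichotomy implemented by the first lines of Algorithm~\ref{algo:dsolve}.

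Next I would check that the linear system solved in the loop determines~$c$. For an initial condition $\lambda_i = \phi^\ast_{u_i,j_i}$, the coordinate vector of~$y$ in~$B_{u_i}$ is $\Delta(\alpha,u_i)\,c$, so $\lambda_i(y)$ is its $j_i$-th entry, i.e.\ $\Lambda_i\cdot c = s_i$ where $\Lambda_i$ is the $j_i$-th row of $\Delta(\alpha,u_i)$. By the cocycle identity $\Delta(u,u'') = \Delta(u',u'')\,\Delta(u,u')$ and the initialisation $u_0=\alpha$, the matrices $\tilde\Delta_i$ built in the loop enclose $\Delta(\alpha,u_i)$, hence $\tilde\Lambda_i\ni\Lambda_i$. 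Since $\mathscr I$ defines~$y$, the forms $\lambda_1,\dots,\lambda_m$ span~$W^\ast$ and therefore form a basis of it, so the square system $[\Lambda_i\cdot c = s_i]$ is nonsingular with unique solution the true coordinate vector of~$y$, and its matrix has a condition number depending only on the fixed data $P,u_1,\dots,u_m$. Hence, for $p$ large enough, solving the interval system $[\tilde\Lambda_i\cdot\tilde c = \tilde s_i]$ succeeds and returns $\tilde c\ni c$, and the real part of~$\tilde c_{j_0}$ is an interval containing $c_{j_0}=\lim_{t\to\alpha}y(t)$.

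For the quantitative bounds, Theorem~\ref{thm:transition matrix} gives, for each of the $m=O(1)$ calls $\textsc{TransitionMatrix}(P,u_{i-1},u_i,p)$, an enclosure of $\Delta(u_{i-1},u_i)$ of width $O(2^{-p})$ computed in time $O(\mop{M}(p(\log p)^2))$; since the exact matrices are fixed and bounded, the $O(1)$ interval matrix products preserve a width of $O(2^{-p})$ (with a larger implied constant) at cost $O(\mop{M}(p))$, and the final linear solve, with $2^{-p}$-accurate data of bounded magnitude and a fixed condition number, contributes a further $O(2^{-p})$ to the width and $O(\mop{M}(p))$ to the time, provided $p$ is large enough that it does not fail. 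The returned interval therefore has width $O(2^{-p})$ and contains $\lim_{t\to\alpha}y$, and the total running time is $O(\mop{M}(p(\log p)^2))$. The step I expect to be the main obstacle is the second paragraph: making rigorous the absence of cancellation among unbounded Frobenius components --- so that a solution with a finite limit has no mass on them --- and treating uniformly the cases where $\alpha$ is an ordinary point, a regular singular point at which the exponent~$0$ occurs, and a regular singular point at which it does not.
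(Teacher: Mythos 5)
Your proof is correct and follows essentially the same route as the paper's: decompose~$y$ on the distinguished basis~$B_\alpha$, identify $\lim_{t\to\alpha}y(t)$ with the coefficient~$c_{j_0}$ (or with~$0$) using the distinct asymptotic behaviours of the basis elements, recover the coefficient vector~$c$ by solving the interval linear system built from transition matrices, and invoke Theorem~\ref{thm:transition matrix} for the complexity. You simply make explicit a few points the paper leaves implicit --- in particular the cocycle identity and the no-cancellation argument showing that a finite limit forces zero mass on the unbounded Frobenius components --- but the argument is the same (note only that the linear system may in general be over-determined rather than square, as the paper remarks).
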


\begin{proof}
  At the end of the loop, we have $\Delta(\alpha, u_i) \in \tilde \Delta_i$,
  $1 \leq i \leq m$, and the entries of $\tilde \Delta_i$ are intervals of
  width~$O(2^{-p})$.
  The coefficients $c = (c_i)_i$ of the decomposition of~$y$ in the
  basis~$B_\alpha$ satisfy $\Lambda_i \cdot c = s_i$ for all~$i$,
  where $\Lambda_i$ is the $j_i$th row of $\Delta(\alpha, u_i)$.
  As~$\mathscr I$ is a system of good initial conditions, the linear system
  $(\Lambda_i \cdot x = s_i)_i$ has no other solution.
  Step~\ref{step:solve} hence succeeds in solving the interval version as soon
  as the~$\tilde s_i$ and the entries of the~$\tilde \Delta_i$ are thin enough
  intervals.
  It then returns intervals of width~$O(2^{-p})$.

  We assumed that $y$ tends to a finite limit at~$\alpha$.
  It follows that the decomposition of~$y$ on~$B_\alpha$ only involves the
  basis elements with a finite limit at~$\alpha$.
  Either~$B_\alpha$ contains an element $\phi_{\alpha, j_0}$ that tends to~$1$,
  in which case $\lim_{\alpha} y = c_{j_0}$, or every
  solution that converges tends to zero, and then the limit is zero.
  Since, by assumption, $\lim_{\alpha} y$ is real, we can ignore the imaginary
  part of the computed value.
  In both cases, the algorithm, when it succeeds, returns a
  real interval of width $O(2^{-p})$ containing $\lim_{\alpha} y$.

  As for the complexity analysis,
  all $u_i$ including~$\alpha$ are algebraic, hence
  Theorem~\ref{thm:transition matrix} applies and shows that each call to
  \textsc{Transition\allowbreak{}Matrix} runs in time $O(\mop{M}(p (\log p)^2))$.
  The matrix multiplications at step~\ref{step:transition matrix} take
  $O(\mop M(p))$ operations.
  The cost of solving the linear system (which is of bounded size) is
  $O(\mop{M}(p))$ as well.
  The cost of the remaining steps is independent of~$p$.
\end{proof}

It remains to show how to implement \textsc{PickGoodPoints}.
Choosing the points at random works with probability one.
The procedure described below has the advantage of being deterministic
and implying (at least in principle) bounds on the bit size of the~$u_i$.

\begin{lemma} \label{lem:evaluation-points}
  Given $P$ and two real numbers $\alpha < \beta$, one can deterministically
  select $m$~points
  $u_1, \dots, u_m \in (\alpha, \beta) \cap \bQ$
  such that the evaluations $y \mapsto y(u_i)$ are good initial conditions
  for~$P$ on $(\alpha, \beta)$.
\end{lemma}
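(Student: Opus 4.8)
The goal is to select $m = \operatorname{order}(P)$ rational points $u_1,\dots,u_m$ in $(\alpha,\beta)$ so that the evaluation functionals $y \mapsto y(u_i)$ are linearly independent on the $m$-dimensional solution space $W$ — equivalently, so that no nonzero solution of $P(y)=0$ vanishes at all of $u_1,\dots,u_m$. Since $(\alpha,\beta)$ contains infinitely many rational points and a nonzero analytic solution has only isolated zeros, a random (or greedy) choice works with probability one; the point of the lemma is to make this deterministic with a controlled construction.

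\begin{proof}[Proof sketch]
First, I would pick a working subinterval $(\alpha', \beta') \subseteq (\alpha, \beta)$ with rational endpoints that contains no singular point of~$P$ in its closure, so that $W$ restricted to $(\alpha',\beta')$ is still $m$-dimensional and every solution is analytic on a neighbourhood of $[\alpha',\beta']$. Fix any ordinary rational point $u_1 \in (\alpha',\beta')$; the functional $y\mapsto y(u_1)$ is nonzero on $W$, so its kernel $W_1 = \{ y \in W : y(u_1)=0 \}$ has dimension $m-1$. I proceed inductively: having chosen $u_1,\dots,u_k$ so that $W_k = \{ y\in W : y(u_1)=\cdots=y(u_k)=0\}$ has dimension $m-k$ (for $k<m$), I must find a rational $u_{k+1}\in(\alpha',\beta')$, distinct from the previous ones, on which some element of $W_k$ does not vanish. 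Since $W_k \neq 0$, pick any nonzero $y \in W_k$; because $y$ is analytic and not identically zero on the connected set $(\alpha',\beta')$, its zero set is finite, so all but finitely many rationals of $(\alpha',\beta')$ work. After $m$ steps, $W_m = 0$, which is exactly the statement that $y\mapsto y(u_i)$, $1\le i\le m$, span the dual of~$W$; since each $u_i$ is an ordinary point, these are indeed good initial conditions in the sense of the definition above.

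To make the selection genuinely effective rather than merely "choose something that works", I would replace the existential "all but finitely many rationals" by an explicit enumeration: fix an enumeration $q_1, q_2, \dots$ of the rationals in $(\alpha',\beta')$ (e.g.\ by increasing height), and at step $k+1$ scan through it, testing each candidate $q_\ell$ for whether $y\mapsto (y(u_1),\dots,y(u_k),y(q_\ell))$ has rank $k+1$ on~$W$. This rank test is decidable: one can compute a basis of power-series (or truncated) solutions of $P$ at a fixed ordinary base point and evaluate the Wronskian-type minor, or equivalently compute $\operatorname{rank}$ of the matrix $\bigl(\phi_{v,i}(u_j)\bigr)$ for the distinguished basis $B_v$ at some fixed $v$; the matrix entries are convergent series in algebraic numbers and comparing the rank to $k+1$ only requires a finite, a priori bounded, truncation because the minor is a nonzero analytic function of the evaluation points once the configuration is generic. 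One obtains termination because, as argued above, a valid $q_\ell$ exists, and the height at which it occurs is bounded in terms of the number of zeros of the relevant solutions, which in turn is bounded by standard estimates (e.g.\ via the degree of~$P$ and the number of singular points, as in a Polya--Turán / complex oscillation bound).

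The main obstacle, and the only genuinely non-formal part, is making the bit-size claim ("at least in principle, bounds on the bit size of the~$u_i$") precise: this requires an effective bound on the largest real zero-count of a nonzero solution of a Fuchsian operator, or more directly a lower bound on the spacing of "bad" configurations, so that the enumeration provably halts at bounded height. For a Fuchsian operator with rational exponents one can bound the number of real zeros of any solution on an interval free of singularities in terms of $m$ and the number of singular points (a quantitative form of the fact that $W$ is a Chebyshev-like system away from singularities), and then the height bound follows by pigeonhole on subdivisions of $(\alpha',\beta')$. Since the paper only needs the deterministic selection to exist and does not track this constant, I would state the height bound in this qualitative form and relegate the quantitative version to a remark. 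Everything else — the induction, the rank test, termination — is routine.
\end{proof}
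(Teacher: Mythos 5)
Your existence argument (inductive dimension drop on the solution space) is fine, but it takes a different route from the paper and the effectivization has a real gap. The paper's proof is more direct: fix a closed rational interval $K\subset(\alpha,\beta)$ free of singular points, take $u_1=\min K$ (WLOG $u_1=0$), and use the distinguished basis $B_{u_1}$, so that $\psi_j(u)=u^{j-1}+\eta_j(u)$ with $\eta_j(u)=O(u^m)$. The key quantitative input is that one can compute a constant $C$ with $|\eta_j(u)|\le C|u|^m$ on all of $K$ (an effective majorant of the type provided by \cite{Hoeven_2001}); from this one directly computes an $\epsilon>0$ such that the perturbed Vandermonde matrix $(u_i^{j-1}+\eta_j(u_i))_{i,j}$ is nonsingular for \emph{any} distinct $u_2,\dots,u_m\in(0,\epsilon)$. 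Any $m$ distinct rationals in $(0,\epsilon)$ then work, with no zero-testing and with immediate control on the bit sizes. This is what lets the paper claim deterministic selection together with ``bounds on the bit size of the $u_i$''.

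The gap in your version is the claim that the rank test at each candidate $q_\ell$ ``is decidable'' via an ``a priori bounded'' truncation. Deciding whether the relevant minor is exactly zero amounts to deciding whether a real number built from transcendental solution values vanishes; from numerical enclosures this is only semi-decidable (you can certify nonvanishing, never vanishing), and the phrase ``because the minor is a nonzero analytic function of the evaluation points once the configuration is generic'' is circular, since genericity of the candidate is precisely what is being tested. Termination can be rescued by dovetailing over candidates and precisions, stopping once some candidate is certified good; this works because a good candidate exists, but it renders the height bound you invoke irrelevant and gives a much less explicit procedure. Also, the oscillation bounds you lean on (number of real zeros of a solution of a Fuchsian operator controlled by $m$ and the number of singularities alone) are not ``standard estimates''; making them effective is a nontrivial problem in its own right, and the paper's proof is arranged precisely so as not to need anything of the sort.
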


\begin{proof}
A sufficient condition for $y \mapsto y(u_i)$ to be good initial conditions is that
the matrix $M = (\psi_j(u_i))_{i,j}$, for some basis~$(\psi_j)$ of~$W$, be
invertible.
Let $K \subset (\alpha, \beta)$ be a closed interval with rational endpoints
containing only ordinary points.
Let $u_1 = \min K$.
Assume without loss of generality $u_1 = 0$, and take~$(\psi_j) = B_{u_1}$.
The matrix~$M$ is then of the form $(u_i^{j-1} + \eta_j(u_i))_{i,j=1}^m$
where, for all~$j$, $\eta_j(u) = O(u^m)$ as $u \to 0$.
In fact, there exists a computable~\cite[e.g.,][]{Hoeven_2001} constant~$C$
such that $|\eta_j(u)| \leq C |u|^m$ for all $u \in K$.
Therefore, one can compute a value $\epsilon > 0$ such that~$M$ is invertible
for any distinct $u_2, \dots, u_m$ in $(0, \epsilon)$.
The result follows.
\end{proof}

In practice, one can reduce the number of recursive calls in the main algorithm
by replacing, when possible, some of the conditions $y(u_i)=s_i$ by conditions
that result from the continuity of $v(t)$ at exceptional
points, or from its analyticity at singular points of the Picard-Fuchs operator
lying in~$\bR \setminus \Sigma$.
For instance, a solution that is analytic at~$u$ must lie in the subspace
spanned by the elements of~$B_u$ of leading term $(z-u)^\gamma$ with
$\gamma \in \bN$ and no logarithmic part.

\section{Complexity analysis}\label{sec:complexity}

Let us finally study the complexity of Algorithm~\ref{algo:volume} to
conclude the proof of Theorem~\ref{thm:main}.
For fixed $(f_1, \dots, f_r)$, all intermediate data (Picard-Fuchs
equations, critical values and specialization points chosen for the recursive
calls) are fixed thanks to the deterministic behaviour of \textsc{PickGoodPoints}
(Lemma~\ref{lem:evaluation-points}).
Thus, the number of recursive calls does not depend on~$p$.

Now, the main point is to observe that, by Proposition~\ref{prop:spec-dsolve},
performing recursive calls with precision $p+O(1)$ is enough.
One can make the width of the output interval smaller than~$2^{-p}$ by doubling~$p$ and
re-running the algorithm (if necessary with a more accurate approximation
  of~$\mathscr I$) a bounded number of times.
By Proposition~\ref{prop:dsolve}, the total cost of the calls to \textsc{DSolve}
is $O(\mop M(p \log(p)^2))$.
The only other step whose complexity depends on~$p$ is the computation of real roots
of fixed univariate polynomials in the base case, which takes $O(M(p))$ operations using Newton's method.
Using the bound $M(p) = O(p \log(p)^{1+\epsilon})$, Theorem~\ref{thm:main} follows.

This theorem ignores the dependency of the cost on the
dimension~$n$ of the ambient space or the maximum degree~$D$ of the input
polynomials.
Under some assumptions, one can bound the number of recursive calls arithmetic
cost of computing Picard-Fuchs equations and critical values as follows.
First consider Algorithm~\ref{algo:volumerec}, and let $\delta$ be the
degree of $f$. By Lemma~\ref{lemma:sard-algo}, the number of critical values and
the cost of computing them are bounded by $\delta^{O(n)}$;
in the notation of the algorithm, this shows that $\ell \leq \delta^{O(n)}$.

Under Dimca's conjecture~\cite{Dimca_1991}, the cost of computing the
Picard-Fuchs equation is $\delta^{O(n)}$ and it has order $m \leq \delta^n$
according to the discussion following Theorem~\ref{thm:algo-periods}.
One can likely obtain the same bounds without this conjecture by replacing
the deformed equation of Section~\ref{sec:general-semi-algebraic} by
$f - t \sum_i x_i^{\delta+n}$, which permits using the ``regular case''
of~\cite{BostanLairezSalvy_2013}.
Solving the recurrence $C(n+1, \delta) = \delta^{O(n)} C(n,\delta)$
shows that the algebraic steps of Algorithm~\ref{algo:volumerec} take
$\delta^{O(n^2)}$ operations in~$\bQ$.

Turning to Algorithm~\ref{algo:volume}, Lemma~\ref{lemma:sard-algo} and
Theorem~\ref{thm:algo-periods} show that the cost
of the calls to \textsc{CriticalValues} and \textsc{PicardFuchs} are dominated
by that of the calls to Algorithm~\ref{algo:volumerec} (with an input
polynomial of degree $\delta\leq rD$).
Therefore, the algebraic steps use $(rD)^{O(n^2)}$
operations in $\bQ$ in total, as announced in §\ref{sec:intro}.

We leave for future research the question of analyzing the boolean
cost of the full algorithm with respect to $n$, $D$, and the bit size of the input
coefficients.
This requires significantly more work,
as one first needs to control the bit size of the points picked by
\textsc{PickGoodPoints} in the recursive calls.
Additionally, to the best of our knowledge, no analogue of
Theorem~\ref{thm:transition matrix} fully taking into account the order, degree,
and coefficient size of the operator~$P$ is available in the literature.

\section{Conclusion}

Our algorithm generalizes to non-basic bounded semi-algebraic sets since their volume 
can be written as a linear combination with $\pm 1$ coefficients of volumes of
\emph{basic} semi-algebraic sets.

An important question that we leave for future work is that of the practicality of
our approach.
While the worst-case complexity bound is exponential in~$n^2$,
there are a number of opportunities to exploit special features of the input
that could help handling nontrivial examples in practice.
In particular:
\begin{enumerate*}[(1)]
\item the number of recursive calls only depends on the number
of \emph{real} critical points;
\item as already noted, it can be reduced by exploiting some knowledge
of the continuity of the slice volume function or its analyticity at exceptional
points;
\item it turns out that, in our case, the integral
appearing in Theorem~\ref{thm:volume-section2} always is singular at infinity,
and, as a consequence, the Picard-Fuchs equations we encounter do
not reach the worst-case degree bounds.
\end{enumerate*}
Ideally, one may hope to refine the complexity analysis to reflect some of
these observations. %

Another natural question is to extend the algorithm to unbounded semi-algebraic
sets of finite volume, or even real periods in general, using the ideas
in~\cite{Viu-Sos_2015}.
Note also that, using quantifier elimination \cite[e.g.,][]{BPR96}, boundedness can be verified in boolean time $q(rD)^{O(n)}$
where $q$ bounds the bit size of the input coefficients. %

Finally, it is plausible that an algorithm of a similar structure
but using numerical quadrature recursively instead of solving Picard-Fuchs equations
would also have polynomial complexity in the precision for fixed~$n$ and be faster
at medium precision.

%%% -*-BibTeX-*-
%%% Do NOT edit. File created by BibTeX with style
%%% ACM-Reference-Format-Journals [18-Jan-2012].

\end{document}